\documentclass{article}
  \let\svthefootnote\thefootnote
\newcommand\blankfootnote[1]{%
  \let\thefootnote\relax\footnotetext{#1}%
  \let\thefootnote\svthefootnote%
}

\usepackage{amsmath, amssymb, amsthm}
\usepackage{graphicx,psfrag,epsf}
\usepackage{enumerate}
\usepackage{placeins}
\usepackage{natbib}
\usepackage{csquotes}

\usepackage{xspace}

\def\spacingset#1{\renewcommand{\baselinestretch}%
{#1}\small\normalsize} \spacingset{1}

\newcommand{\data}{\ensuremath{\textbf{x}}\xspace}
\newcommand{\Data}{\ensuremath{\textbf{X}}\xspace}
\newcommand{\lap}{\ensuremath{\textsf{Lap}}\xspace}

\DeclareMathOperator*{\argmin}{arg\,min}

\newtheorem*{rep@theorem}{\rep@title}
\newcommand{\newreptheorem}[2]{%
\newenvironment{rep#1}[1]{%
 \def\rep@title{#2 \ref{##1}}%
 \begin{rep@theorem}}%
 {\end{rep@theorem}}}
\makeatother

\newreptheorem{theorem}{Theorem}
\newreptheorem{lemma}{Lemma}
\newreptheorem{corollary}{Corollary}

\usepackage{microtype}
\usepackage{graphicx}
\usepackage{subfigure}
\usepackage{booktabs} % for professional tables
\usepackage[noblocks]{authblk}
\usepackage{xcolor}
\usepackage{hyperref}
\usepackage{algorithmic}
\usepackage{algorithm}
\usepackage[margin=1in]{geometry}

%%%%%%%%%%%%%%%%%%%%%%%%%%%%%%%%
% THEOREMS
%%%%%%%%%%%%%%%%%%%%%%%%%%%%%%%%
\theoremstyle{plain}
\newtheorem{theorem}{Theorem}[section]

\newtheorem{lemma}[theorem]{Lemma}
\newtheorem{corollary}[theorem]{Corollary}
\theoremstyle{definition}
\newtheorem{definition}[theorem]{Definition}

\theoremstyle{remark}

\usepackage[textsize=tiny]{todonotes}
\title{The Test of Tests: A Framework For Differentially Private Hypothesis Testing}
\author[1]{Zeki Kazan}
\author[2]{Kaiyan Shi}
\author[3]{Adam Groce}
\author[4]{Andrew Bray}
\affil[1]{Department of Statistical Science, Duke University}
\affil[2]{Department of Computer Science, University of Maryland}
\affil[3]{Department of Computer Science, Reed College}
\affil[4]{Department of Statistics, UC Berkeley}
\date{}

\begin{document}
\maketitle

 \vspace{-8mm}
 
\begin{abstract}
We present a generic framework for creating differentially private versions of \textit{any} hypothesis test in a black-box way.  We analyze the resulting tests analytically and experimentally.  Most crucially, we show good practical performance for small data sets, showing that at $\epsilon =1$ we only need 5-6 times as much data as in the fully public setting.  We compare our work to the one existing framework of this type, as well as to several individually-designed private hypothesis tests.  Our framework is higher power than other generic solutions and at least competitive with (and often better than) individually-designed tests.
\end{abstract}

\section{Introduction}

Hypothesis tests are one of the most basic and common statistical analyses that analysts perform on data.  The goal of a hypothesis test is to see whether some ``effect'' in the data (e.g., men are taller than women) is plausibly the result of random variation in the sample, rather than a true fact about the population. Hypothesis tests are the bedrock of statistical analysis in the social sciences, medicine, and other fields, and a variety of hypothesis tests are used, depending on the type of data and the sort of effect one is considering.

However, data in these fields often consists of private information about individuals.  Researchers are under moral and legal obligations to protect the privacy of that data and can often only access that data if they can guarantee their analysis will not violate the privacy of those individuals.  Differential privacy has emerged as the most convincing formal definition of privacy protection in this setting.

Differentially private versions of many popular hypothesis tests have been created, including private analogues of $\chi^2$ tests \citep{fienberg2011privacy, gaboardi2016differentially, johnson2013privacy, rogers2017new, uhlerop2013privacy, vu2009differential, wang2015revisiting}, ANOVA tests \citep{campbell2018differentially, swanberg2019improved}, and many others \citep{alabi2022hypothesis, barrientos2019differentially, canonne2019private, couch2019differentially, ding2018comparing, d2015differential, narayanan2022private, nguyen2017differentially, solea2014differentially, sheffet2017differentially}. However, this is work that privacy researchers must carefully repeat for each possible hypothesis test.  While this is feasible for the most frequently used tests like $\chi^2$ and ANOVA, it is not plausible to expect this work to be repeated for the wide range of hypothesis tests that exist, many of which are highly specific to particular situations.  For example, economists (e.g.~\cite{gatignon1997strategic, jaworski1993market}) use the Chow test to test for a structural break in a regression line;
researchers studying ordinal data (e.g.~\cite{kramer1996ordered, uddin2018factors}) use ordered logistic regression.
Conducting these sorts of analyses privately currently requires collaboration with privacy experts and prohibitive time and effort spent on the study of private statistics before the applied question can even be considered.

In this paper we present a general framework that can automatically create a private version of \textit{any} existing hypothesis test and demonstrate its practicality. For example, at $\epsilon=1$ our method generally requires no more than 5 or 6 times as much data to detect a given effect as would be required in the non-private setting. This makes our off-the-shelf tool competitive with (and occasionally superior to) some individually tailored private hypothesis tests.

\subsection{Our contributions}

The framework we present can be viewed as an instantiation of a method mentioned in the literature, particularly in \cite{canonne2019structure,canonne2019private}. In \cite{canonne2019private}, Cannone et al.~describe the method as follows.
\begin{displayquote}
    ``There exists a black-box method for obtaining a differentially private tester from any non-private tester $A$ using the sample-and-aggregate framework \citep{nissim2007smooth}. Specifically, given any tester $A$ with sample complexity $n$, we can obtain an $\varepsilon$-differentially private tester with sample complexity $O(n/\varepsilon)$."
\end{displayquote}
Unfortunately, these two sentences are the full extent to which this method is considered.  The authors use it only as a point of comparison to show that their method has better asymptotic performance.  We are interested not in the asymptotic case, but in concrete performance that will allow the practical use of private statistics on real data.

The authors quoted above also do not fully specify the method to which they are referring.  Experienced privacy researchers can fill in the details on their own, but they can be filled in in different ways.  Our goal here is to fill in the details completely, giving pseudocode and publicly available implementations, but also to fill in these details in the best possible way and to give concrete analysis of the power of the resulting tests. In particular, we do the following:

\begin{itemize}
    \item We give a framework for creating a private version of any known (non-private) hypothesis test.  This uses the subsample-and-aggregate method, with the aggregation done by the uniformly most powerful binomial test given by \cite{awan2018differentially}.
    \item We give precise analytic expressions for the power of our test in terms of the power of the underlying non-private test.  These finite sample (rather than asymptotic) calculations mean that given a specific public test, one can easily tune parameters in our framework to optimize its power.  These calculations are where we derive the claim that we can get $\epsilon=1$ privacy with 5-6 times the data needed for the public test, but we stress that this is an upper bound \textit{without} test-specific parameter tuning, and in practice our statistical power is often significantly higher.
    \item We implement our framework and use it to privatize several specific tests\footnote{Code that implements our methods is available at \url{https://github.com/diff-priv-ht/test-of-tests}.}.  In particular we consider the context where Cannone et al.~dismissed this method as less powerful than their proposal.  We find that despite their superior asymptotic performance, our framework outperforms their test in a range of practical settings. For example, with a large effect size we obtain 80\% power at $n = 65$, while their test is invalid for $n < 359$ and does not reach 80\% power until $n = 6500$.
\end{itemize}

In concurrent work, Pe{\~n}a and Barrientos \citep{pena2022differentially} also provide a generic framework that implements the idea quoted from Canonne et al.~above.  We delayed the publication of this work to add a full comparison to their framework, which can be found in Section \ref{sec:PB}.  Compared to their framework, ours has meaningfully higher power, and (unlike theirs) can be run for all database sizes, $\epsilon$ values, and choices of public test.

Below, we provide an overview of differentially private hypothesis testing. In section \ref{sec:framework} we outline our test procedure, providing pseudo-code and an analytic expression for the test's power. In Section \ref{sec:PB} we compare our framework to the only existing alternative, that of Pe{\~n}a and Barrientos.  Finally, in Section \ref{sec:tailored} we compare our general framework to some specific existing private tests. 

\section{Background}

In this section, we first discuss hypothesis testing in general. We then introduce differential privacy and the results we will use. 
Finally, we describe prior work on differentially private hypothesis testing.

\subsection{Hypothesis Testing} \label{bk:ht}

Consider a researcher who wants to determine if a new miracle weight-loss drug works as advertised. They measure the weight-loss of individuals in two groups, giving one the drug and one a placebo. They wish to know if the drug had a significant effect. Their first step is to formulate a null hypothesis ($H_0$) - a theory of how the data is distributed. Here $H_0$ may be that the differences in the groups are due to random variation; the drug has no advantage over the placebo.

To test whether or not the data $\data$ is consistent with $H_0$, the researcher will compute a \textit{test statistic} $\tau(\data)$. The choice of a function $\tau$ to compute the test statistic largely determines which hypothesis test being used. For a random database $\Data$ drawn according to $H_0$, the distribution of the statistic $T = \tau(\Data)$ can be determined either analytically or through simulation. The researcher then computes a \textit{p-value}, the probability that the observed test statistic or a more extreme value would occur under $H_0$.

\begin{definition}
For an observed test statistic $t=\tau(\data)$ and null hypothesis $H_0$, the one-sided \emph{p-value}, $p$, is defined as
$$p = \Pr[T \geq t \mid T = \tau(\Data) \text{ and } \Data \leftarrow H_0].$$
\end{definition}

If the function $\tau$ is well-chosen, then the more the underlying distribution of $\Data$ differs from the distribution under $H_0$, the more likely a low p-value will be. Typically a significance threshold $\alpha$ is chosen, and $H_0$ is rejected as a plausible explanation of the data when $p < \alpha$. The choice of $\alpha$ determines the \textit{type I error rate}, the probability of incorrectly rejecting a true null hypothesis. 

We define the \textit{critical value} $t^*$ to be the value of the test statistic $t$ when $p = \alpha$. We use this to define the \textit{statistical power}, a measure of how likely a hypothesis test is to pick up an effect (i.e. to reject a false null hypothesis). The power is a function of how much the underlying distribution of $\Data$ differs from the distribution under $H_0$ as well as the size of the database.

\begin{definition}
For a given alternate data distribution $H_A$, the \emph{statistical power}, $\theta$,  of a hypothesis test is 
$$\theta = \Pr[T \geq t^* \mid T = \tau(\Data) \text{ and } \Data \leftarrow H_A].$$
\end{definition}

The goal of hypothesis test design is to maximize statistical power, ideally finding a single test that has good performance for a range of effects.

\subsection{Differential Privacy}

To convince the public to allow their confidential data to be used for statistical analyses, researchers need to guarantee that sensitive information will not be compromised.  Previous methods adopted to protect individual privacy, such as anonymization, have been shown to fail in numerous cases (e.g. \cite{sweeney2002k, narayanan2008robust, homer2008resolving}).

Differential privacy, proposed in 2006 by Dwork et al. \citep{dwork2006calibrating}, is a formal definition of privacy. It protects an individual's privacy by requiring that any output occurs with roughly equal probability regardless of value of that individual's information. Databases that differ only in the data of one individual are called \textit{neighboring} databases. 

\begin{definition}[Differential Privacy] \label{def:dp}
A randomized algorithm $\tilde{f}$ on databases is $(\varepsilon,\delta)$ differentially private if for all $\mathcal{S} \subseteq \textup{Range}(\tilde{f})$ and
for databases $\data, \data'$ that only differ only in the values of one row:
$$\Pr[\tilde{f}(\data) \in \mathcal{S}] \leq e^\varepsilon \Pr[\tilde{f}(\data') \in \mathcal{S}]+\delta.$$
\end{definition}

It is possible that $\delta = 0$. Under this condition, the randomized algorithm $\tilde{f}$ is said to be $\varepsilon$-differentially private. In general, $\varepsilon$ indicates the privacy level (a smaller $\varepsilon$ indicates a higher privacy guarantee) and $\delta$ determines the likelihood of privacy failure.  An $(\varepsilon, \delta)$-differential privacy guarantees that, with $1-\delta$ probability, the privacy loss is bounded by $e^\varepsilon$. 

Differential privacy is resistant to post processing --- if an algorithm is differentially private, any further analysis or computation on the output (without dependence on the database) will also result in private output.

\begin{theorem}[Post Processing] \label{thm:pp}
 Let $\tilde{f}$ be an $(\varepsilon,\delta)$-differentially private randomized
algorithm. Let $g$ be an
arbitrary randomized algorithm. Then $g \circ \tilde{f}$ is $(\varepsilon,\delta)$-
differentially private.
\end{theorem}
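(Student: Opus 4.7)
The plan is to prove the theorem in two steps: first handle the case where $g$ is deterministic using a preimage argument, then lift to randomized $g$ by conditioning on its internal randomness.

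For the deterministic case, fix neighboring databases $\data, \data'$ and any measurable $\mathcal{S} \subseteq \textup{Range}(g \circ \tilde{f})$. The key observation is that the event $\{g(\tilde{f}(\data)) \in \mathcal{S}\}$ is identical to the event $\{\tilde{f}(\data) \in g^{-1}(\mathcal{S})\}$. Letting $\mathcal{T} = g^{-1}(\mathcal{S}) \subseteq \textup{Range}(\tilde{f})$, I would apply Definition \ref{def:dp} to $\tilde{f}$ with the set $\mathcal{T}$ to obtain $\Pr[\tilde{f}(\data) \in \mathcal{T}] \leq e^\varepsilon \Pr[\tilde{f}(\data') \in \mathcal{T}] + \delta$, which immediately translates to the desired bound for $g \circ \tilde{f}$.

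For the randomized case, I would represent $g$ as a deterministic function of its input and an independent random seed $R$: that is, $g(y) = h(y, R)$ where $R$ is sampled from some distribution independent of $\tilde{f}$'s internal coins. Conditioning on $R = r$, the map $y \mapsto h(y, r)$ is deterministic, so the previous step gives $\Pr[h(\tilde{f}(\data), r) \in \mathcal{S}] \leq e^\varepsilon \Pr[h(\tilde{f}(\data'), r) \in \mathcal{S}] + \delta$ for each $r$. Taking expectation over $R$ on both sides and using linearity of expectation preserves the inequality, yielding the theorem.

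The main obstacle, such as it is, is being careful about measurability: one needs $g^{-1}(\mathcal{S})$ to be a measurable subset of $\textup{Range}(\tilde{f})$ so that the DP condition can be applied, and one needs the independence of $g$'s randomness from $\tilde{f}$'s randomness to justify the conditioning. Both assumptions are standard and implicit in the differential privacy framework, so this is more a matter of bookkeeping than a genuine difficulty; the substantive content is the preimage identity in the deterministic step.
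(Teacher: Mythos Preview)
Your argument is correct and is in fact the standard proof of post-processing for differential privacy. Note, however, that the paper itself does not supply a proof of Theorem~\ref{thm:pp}: it is stated as a background result in the preliminaries section, with the implicit citation to the foundational differential privacy literature. So there is no paper proof to compare against; your write-up simply fills in what the authors take as known.
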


Any differentially private algorithm must be randomized.  The most popular (and simple) method is the \textit{Laplace mechanism}, introduced by Dwork et al. \citep{dwork2006calibrating}, which adds noise drawn from the Laplace distribution to the output of the query one seeks to privatize. 
 
\begin{definition}[Laplace Distribution]\label{def:ld}
The Laplace Distribution
centered at 0 with scale $b$ has probability density function
$$ \lap(x|b)=\frac{1}{2b}\textup{exp}\Big(-\frac{|x|}{b}\Big).$$ We write $\lap(b)$ to denote the Laplace distribution with scale $b$.
\end{definition} 

The magnitude through which the alteration of a single row in the database can change the output of a query is called the global sensitivity.

\begin{definition}[Global sensitivity] \label{def:gs}
The global sensitivity of a function $f$  is:
$$GS_f = \underset{\data,\data'}{ \textup{max}}\ |f(\data)-f(\data')|,$$ where $\data$ and $\data'$ are neighboring databases. 
\end{definition}

The standard deviation of the Laplace Distribution used to introduce noise depends on both $\varepsilon$ and $GS_f$.

\begin{definition}[Laplace Mechanism]  \label{def:lm}
Given any function $f$, the Laplace mechanism is defined as
$$\tilde{f}(\data)= f(\data)+Y,$$ where $Y$ is drawn from $\lap(GS_f/\varepsilon)$, and $GS_f$ is the global sensitivity of $f$.
\end{definition}

\begin{theorem}[Laplace Mechanism] \label{thm:lm}
The Laplace mechanism $(\varepsilon,0)$-differentially private.\end{theorem}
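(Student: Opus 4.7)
The plan is to verify the $\varepsilon$-DP inequality directly by comparing the output density of $\tilde f$ on two neighboring databases $\data, \data'$ pointwise, then integrating. Since $\tilde f(\data) = f(\data) + Y$ with $Y \sim \lap(GS_f / \varepsilon)$, the random variable $\tilde f(\data)$ has density
\[
p_{\data}(z) \;=\; \frac{\varepsilon}{2\,GS_f}\exp\!\left(-\frac{\varepsilon\,|z - f(\data)|}{GS_f}\right),
\]
and analogously $p_{\data'}(z)$ with $f(\data)$ replaced by $f(\data')$. My target will be to show $p_{\data}(z) \le e^{\varepsilon}\, p_{\data'}(z)$ for every $z$ in the output space.

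First I would form the ratio and cancel the normalizing constants, obtaining
\[
\frac{p_{\data}(z)}{p_{\data'}(z)} \;=\; \exp\!\left(\frac{\varepsilon}{GS_f}\bigl(|z - f(\data')| - |z - f(\data)|\bigr)\right).
\]
Next I would apply the reverse triangle inequality, $|z - f(\data')| - |z - f(\data)| \le |f(\data) - f(\data')|$, and then invoke Definition \ref{def:gs} to conclude $|f(\data) - f(\data')| \le GS_f$. Substituting this in cancels the $GS_f$ in the denominator, giving the pointwise bound $p_{\data}(z)/p_{\data'}(z) \le e^{\varepsilon}$. The symmetric argument, swapping the roles of $\data$ and $\data'$, shows the ratio is also bounded below by $e^{-\varepsilon}$, though only the upper bound is needed for the definition.

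Finally, I would lift the pointwise density inequality to events: for any measurable $\mathcal{S} \subseteq \textup{Range}(\tilde f)$,
\[
\Pr[\tilde f(\data) \in \mathcal{S}] \;=\; \int_{\mathcal{S}} p_{\data}(z)\, dz \;\le\; \int_{\mathcal{S}} e^{\varepsilon} p_{\data'}(z)\, dz \;=\; e^{\varepsilon}\Pr[\tilde f(\data') \in \mathcal{S}],
\]
which is Definition \ref{def:dp} with $\delta = 0$. The only mildly subtle step is the triangle inequality manipulation; everything else is routine. Because the bound on $|f(\data) - f(\data')|$ used is exactly $GS_f$, this also shows the noise scale $GS_f/\varepsilon$ cannot be reduced without losing the guarantee, which is a useful sanity check on the choice of scale in Definition \ref{def:lm}.
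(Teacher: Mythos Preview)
Your proof is correct and is exactly the standard density-ratio argument for the Laplace mechanism. The paper itself does not supply a proof of Theorem~\ref{thm:lm}; it is stated as a known result from \cite{dwork2006calibrating}, so there is nothing to compare against beyond noting that your argument matches the classical one.
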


Although the Laplace mechanism ensures that an output will not violate privacy, sometimes the global sensitivity is so large that the Laplace noise overwhelms the signal. The \textit{subsample and aggregate} technique \citep{nissim2007smooth} is designed to mitigate this problem. Subsample and aggregate works exactly as it sounds. The database \data with $n$ rows is first partitioned into $m$ groups of approximately equal size. Then a non-private function $f$ is computed in each group independently. Finally, these intermediate results are aggregated through some differentially private mechanism.

\subsection{Related Works}

There is an extensive (and rapidly expanding) literature
examining the problem of converting public hypothesis tests
to the private setting. One line of work \citep{smith2008efficient, smith2011privacy, wasserman2010statistical} studies how fast the distributions of private test statistics converge to the public. These results, however, are often asymptotic and offer little in the way of implementable tests. 
\cite{wang2018statistical} studies the problem of generating a reference distribution more thoroughly, providing a general recipe for approximating the sampling distributions of private test statistics.

Another line of work examines the problem of privatizing the test statistic for the $\chi^2$ test of independence. This includes works in the context of genome-wide association study (GWAS) data \citep{fienberg2011privacy, johnson2013privacy, uhlerop2013privacy}, although they tend to use asymptotic arguments for the uniformity of p-values. Other work \citep{gaboardi2016differentially, wang2015revisiting} has shown that Monte Carlo methods can produce better reference distributions. \cite{vu2009differential} 
provides concrete methods for producing a p-value by adjusting for Laplace noise, while 
\cite{rogers2017new} proposes alternate test statistics that have reference distributions with preferable properties.

Recent works in differentially private hypothesis testing have begun to include in-depth power analyses.  
\cite{awan2018differentially} constructed the universally most powerful test for binomial data (see Section \ref{fwk:binom} for further discussion). 
\cite{brenner2010impossibility} shows that a universally most powerful test cannot exist for data with a domain containing more than two elements.  Nguy{\^e}n and Hui 
 propose methods for differentially private survival analysis \citep{nguyen2017differentially}, two works have addressed the problem of studying the difference in means of normal distributions \citep{ding2018comparing, d2015differential}, and several consider the problem of hypothesis testing for linear regression coefficients \citep{alabi2022hypothesis, barrientos2019differentially, sheffet2017differentially}.  Of these, \cite{barrientos2019differentially} is notable for sharing some conceptual ideas with the framework we propose here. A few works propose tests for the mean of a normal distribution in the univariate \citep{solea2014differentially} and mulivariate \citep{canonne2019private, narayanan2022private} settings. Two works study the one-way ANOVA \citep{campbell2018differentially, swanberg2019improved}, although these are outperformed by work on nonparametric alternatives \citep{couch2019differentially}. 
\cite{avella2021privacy} proposes a hypothesis test based on $M$-estimators that is applicable to general parametric models, such as many of the above.

\section{Framework} \label{sec:framework}

Here we introduce our test of tests (ToT) framework and analyze its power.  We also discuss how to optimize the framework's parameters for a given situation.

\subsection{Private Binomial Test} \label{fwk:binom}

Awan and Slavkovi\'c \citep{awan2018differentially} develop a uniformly most powerful test for binomial data. They define the Truncated-Uniform-Laplace (Tulap) Distribution, 
the sum of the discrete Laplace and uniform distributions.
The distribution is parameterized by a location parameter, $m$, and a scale parameter, $b \in (0,1)$. Its CDF has a closed form; see Definition 4.1 in \cite{awan2018differentially}.
\footnote{The Tulap distribution has a third parameter, $q$, but we always set $q=0$ because our aim is to have $\delta=0$.  Allowing $\delta > 0$ could be done by changing $q$, and would increase the power of our test.}

Let $A \sim \textup{Binomial}(n,p)$. Awan and Slavkovi\'c show that the private test statistic $Z|A \sim \textup{Tulap}(A, e^{-\varepsilon})$ is an $\varepsilon$-differentially private estimate of $A$. 
They also provide an algorithm for producing a p-value to test the hypothesis
\begin{align*}
    H_0: p \leq p_0 \quad \textup{and} 
\quad H_A: p > p_0
\end{align*}
and show that the p-value produced is the smallest $\varepsilon$-DP p-value for this test. See Theorem 7.2 and Algorithm 2 in \cite{awan2018differentially} for further details.

\subsection{Our Algorithm}

\begin{figure}[t]
    \centering
    \includegraphics[width=0.85\columnwidth, keepaspectratio]{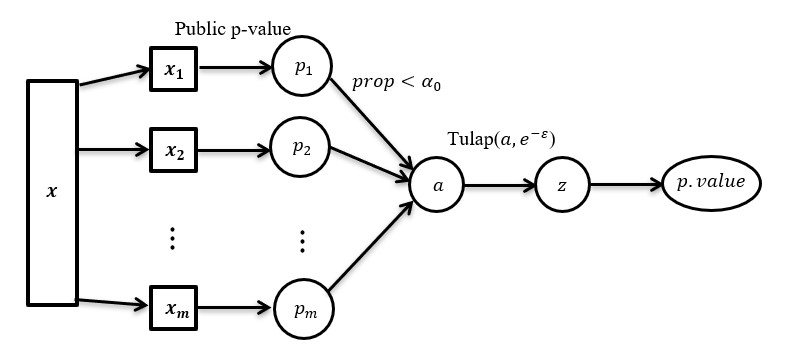}
    \caption{A graphical representation of Algorithm \ref{alg:general}.}\label{fig:tot_diagram}
\end{figure}

We now describe our general algorithm, which we call \textit{test of tests} (ToT), which can privatize all hypothesis tests. The formalization is presented in Algorithm \ref{alg:general} and a graphical representation in Figure \ref{fig:tot_diagram}. We are given a database $\data$ of size $n$, and our goal is to run an $\varepsilon$-private version of hypothesis test $\tau$ \footnote{For concision, we use $\tau$ to represent a test that utilizes $\tau(\data)$.} on that database with significance threshold $\alpha$.  We first partition the input database into $m$ equal sized subsets $\data_1, \ldots, \data_m$. In practice, if $m \nmid n$, then the subsets should be of sizes $\lfloor \frac{n}{m} \rfloor$ and $\lceil \frac{n}{m} \rceil$ as appropriate. The following results will assume that $m \mid n$ for simplicity. In each subset, we conduct the public test $\tau$, computing the p-value and accepting/rejecting according to a sub-test significance threshold of $\alpha_0$. If the number of data points in a subsample is insufficient to run the public test, the p-value is drawn from $\textup{Unif}(0,1)$. Let $a$ be the number of rejects.  Under the null distribution, each instance of $\tau$ rejects with probability $\alpha_0$, so $a$ follows a binomial distribution.

We then conduct Awan and Slavkovi\'c's private binomial test on $a$ to see if it is consistent with a binomial distribution with parameter $\alpha_0$. To privatize $a$, we define $z = \textup{Tulap}(a, e^{-\varepsilon})$. Let $B \sim \textup{Binomial}(m, \alpha_0)$ and $N \sim \textup{Tulap}(0, e^{-\varepsilon})$. Then, the reference distribution is $B+N$ and so the p-value is $P(B + N \geq z)$.

\begin{algorithm}[t]
\caption{Test of Tests}
\label{alg:general}
\begin{algorithmic}
    \STATE {\bfseries Input:} $\data$, $\tau$, $\varepsilon$, $\alpha$, $m$, $\alpha_0$
    \STATE  Partition $\data$ into subsets $\data_1, \ldots, \data_m$ \;
    \FOR{$j = 1$ to $m$}
        \IF{$\tau$ can be run on $\data_j$}
        \STATE $p_j \longleftarrow \tau(\data_j)$ 
        \ELSE
        \STATE $p_j \sim \textup{Unif}(0,1)$
        \ENDIF
    \ENDFOR
    \STATE $a \longleftarrow$ $|\{ p_j : p_j < \alpha_0 \}|$
    \STATE $z \longleftarrow \textup{Tulap}(a, e^{-\varepsilon})$
    \STATE $p.value \longleftarrow P(B + N \geq z)$
    \STATE {\bfseries Output:} $z$, $p.value$
\end{algorithmic}
\end{algorithm}

Note that of the inputs listed, $\data$, $\tau$, $\epsilon$, and $\alpha$ are true inputs from the user, while $m$ and $\alpha_0$ are parameters that can be optimized.  We discuss this optimization in Section \ref{fwk:optim}.

The privacy and validity of Algorithm \ref{alg:general} follow immediately from its design.  

\begin{theorem} \label{thm:priv}
    Algorithm \ref{alg:general} is $\varepsilon$-differentially private.
\end{theorem}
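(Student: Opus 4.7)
The plan is to combine the subsample-and-aggregate structure of Algorithm \ref{alg:general} with the $\varepsilon$-DP guarantee of the Tulap mechanism. First I would note that the partition of $\data$ into $\data_1,\ldots,\data_m$ is data-independent, so for any pair of neighboring databases $\data,\data'$ there is exactly one index $k$ with $\data_k \neq \data_k'$, and $\data_j = \data_j'$ for every $j \neq k$. Coupling the auxiliary randomness used inside the loop (any internal randomness of $\tau$ together with the $\textup{Unif}(0,1)$ fallback draws) across the two executions, the p-values $p_j$ for $j \neq k$ are identical, while only $p_k$ may differ. Consequently the count $a = |\{p_j : p_j < \alpha_0\}|$ shifts by at most one between $\data$ and $\data'$, so $a$ has global sensitivity at most $1$ as a function of the database.

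Next I would invoke the Tulap result of Awan and Slavkovi\'c recalled in Section \ref{fwk:binom}: adding $\textup{Tulap}(0,e^{-\varepsilon})$ noise to an integer-valued statistic of sensitivity $1$ produces an $\varepsilon$-differentially private release. Applied to $a$, this shows that $z = \textup{Tulap}(a, e^{-\varepsilon})$ is an $\varepsilon$-DP function of $\data$. The reported p-value $P(B+N \geq z)$ depends only on $z$ together with the publicly specified parameters $m$ and $\alpha_0$ that determine the reference distribution of $B+N$, so it is pure post-processing of $z$, and by Theorem \ref{thm:pp} the full output $(z,\, p.value)$ is $\varepsilon$-differentially private.

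The step that deserves the most care is the sensitivity calculation, because the $p_j$ are themselves random: one needs to fix a coupling of the auxiliary randomness so that the bound $|a(\data) - a(\data')| \leq 1$ holds as a worst-case statement on the joint sample space rather than only in distribution. Once that coupling is in place, the remaining steps are a routine composition of the subsample-and-aggregate argument with the privacy of Tulap and post-processing.
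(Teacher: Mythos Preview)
Your proposal is correct and follows essentially the same approach as the paper: the paper's proof simply cites the subsample-and-aggregate framework together with the Awan--Slavkovi\'c Tulap privacy guarantee to conclude that $z$ is $\varepsilon$-DP, and then invokes post-processing for the p-value. You unpack the subsample-and-aggregate step explicitly (the sensitivity-$1$ bound on $a$ via a coupling of the auxiliary randomness), which is more detailed than the paper's two-line citation but amounts to the same argument.
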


\begin{proof}
    By Subsample and Aggregate \citep{nissim2007smooth} and Theorem 6.1 in Awan and Slavkovi\'c \citep{awan2018differentially}, which shows the release of the statistic with Tulap noise satisfies privacy, the release of $z$ is $\varepsilon$-differentially private. By Theorem \ref{thm:pp} (post processing), the release of the p-value is also $\varepsilon$-differentially private.
\end{proof}

\begin{theorem} \label{thm:valid}
    Algorithm \ref{alg:general} is valid.  That is, when the data is drawn from $H_0$ the probability of rejection is at most $\alpha$.
\end{theorem}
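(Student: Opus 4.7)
The plan is to show that under $H_0$, the reported p-value is super-uniform, which immediately implies the rejection probability is at most $\alpha$. First, I would establish that each individual subtest outcome behaves no worse than Bernoulli$(\alpha_0)$ under the null: if $\tau$ can be run on $\data_j$, then by validity of the public test at level $\alpha_0$ we have $P(p_j < \alpha_0) \leq \alpha_0$; if $\tau$ cannot be run, then $p_j \sim \text{Unif}(0,1)$ and $P(p_j < \alpha_0) = \alpha_0$ exactly. Because the subsets $\data_1, \dots, \data_m$ are disjoint, the indicators $\mathbf{1}\{p_j < \alpha_0\}$ are independent, so a standard coupling shows $a$ is stochastically dominated by $B \sim \text{Binomial}(m, \alpha_0)$.

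Second, since the Tulap distribution is a location family, we can write $z = a + N$ with $N \sim \text{Tulap}(0, e^{-\varepsilon})$ independent of $a$. Coupling the two constructions with the same noise draw $N$, the stochastic dominance $a \leq_{\mathrm{st}} B$ lifts to $z \leq_{\mathrm{st}} B + N$, i.e.\ $P(z \geq t) \leq P(B+N \geq t)$ for all $t \in \mathbb{R}$.

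Finally, I would invoke the standard super-uniformity argument. Let $G(t) = P(B+N \geq t)$ be the survival function of the reference distribution; the reported p-value is $G(z)$. Because $N$ has a continuous density, $B+N$ is continuous and $G$ is continuous and strictly decreasing, so $G^{-1}$ is well defined. Then under $H_0$,
\[
P(G(z) < \alpha) \;=\; P\bigl(z > G^{-1}(\alpha)\bigr) \;\leq\; P\bigl(B+N > G^{-1}(\alpha)\bigr) \;=\; \alpha,
\]
as required. The only subtle step is the stochastic-dominance lifting from $a$ to $z$; this is where one has to be careful to couple the noise rather than integrate naively, but once the coupling is in place the rest is bookkeeping.
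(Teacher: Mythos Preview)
Your argument is correct and follows the same two-step outline as the paper: first, each subtest rejects with probability at most $\alpha_0$ under $H_0$ (by validity of $\tau$, or by the uniform draw when $\tau$ cannot be run); second, the aggregate binomial-type test on $a$ then has level at most $\alpha$. The paper dispatches the second step in one line by citing Awan and Slavkovi\'{c}, whereas you unpack it via an explicit stochastic-dominance coupling and super-uniformity argument; this added care is not wasted, since it cleanly covers the case where the per-subset rejection probabilities $\theta_j$ may differ (e.g.\ when $m \nmid n$), so that $a$ is Poisson--binomial rather than exactly binomial---a point the paper's appeal to \cite{awan2018differentially} leaves implicit.
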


\begin{proof}
    Each of the $m$ subgroups will reject (i.e., be included in the count $a$) with probability at most $\alpha$.  For most this follows from the validity of the public test $\tau$.  In cases when $\tau$ can't be run, it follows from the uniform selection of $p_j$.  From there, the validity follows immediately from the results of \cite{awan2018differentially}.
\end{proof}

\subsection{Theoretical Power}

We can now analyze the statistical power of the test of tests framework.  We begin by noting its asymptotic sample complexity as a function of $\varepsilon$.  This was stated without proof by \cite{canonne2019private}, and we provide a proof in Appendix \ref{sec:proofs}.

\begin{theorem} \label{thm:sc}
The number of samples required for our test to achieve $\rho$ power is $n = \mathcal{O} \left(c/\varepsilon\right)$, where $c$ is the number of samples needed by the non-private test, $\tau$.
\end{theorem}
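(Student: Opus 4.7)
The plan is to write $n = m s$ where $s$ is the size of each subsample and $m$ is their number, choose $s = \Theta(c)$ and $m = \Theta(1/\varepsilon)$, and show the resulting instance of Algorithm~\ref{alg:general} has power $\rho$. First I would fix a constant $\alpha_0 \in (0, \rho)$, say $\alpha_0 = \rho/2$, and set $s = c$. By the definition of $c$, each non-private test $\tau(\data_j)$ at level $\alpha_0$ rejects with probability at most $\alpha_0$ under $H_0$ and at least $\rho$ under $H_A$ (any constant-factor adjustment for running at level $\alpha_0$ rather than $\alpha$ is absorbed into $s = \Theta(c)$, assuming the usual polynomial scaling of sample complexity in the significance level). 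Consequently, the rejection count $a$ is stochastically bounded above by $\textup{Binomial}(m, \alpha_0)$ under $H_0$ and stochastically dominates $\textup{Binomial}(m, \rho)$ under $H_A$, giving a constant mean gap of $m(\rho - \alpha_0) = \Omega(m)$.

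Next I would show that the Awan--Slavkovi\'c private binomial test of Section~\ref{fwk:binom}, applied to $a$, distinguishes these two regimes with power $\rho$ when $m = \Theta(1/\varepsilon)$. The privatized statistic $z = \textup{Tulap}(a, e^{-\varepsilon})$ has variance $O(m) + O(1/\varepsilon^2)$, combining the binomial variance with a Tulap noise component of scale $\Theta(1/\varepsilon)$, so its standard deviation is $O(\sqrt{m} + 1/\varepsilon)$. The critical value $t^*$ from the reference distribution $\textup{Binomial}(m,\alpha_0) + \textup{Tulap}(0,e^{-\varepsilon})$ is $m\alpha_0 + O(\sqrt{m} + 1/\varepsilon)$ by the same calculation. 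Under $H_A$, $\mathbb{E}[z] - t^* \geq m(\rho-\alpha_0) - O(\sqrt{m}+1/\varepsilon)$, and once the constant hidden in $m = C/\varepsilon$ is taken large enough, this signal $\Omega(C/\varepsilon)$ dominates $O(\sqrt{C/\varepsilon} + 1/\varepsilon)$ by an arbitrarily large multiple of the standard deviation, so a standard quantile/concentration argument yields any desired power $\rho < 1$.

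Combining the two choices gives $n = sm = \Theta(c/\varepsilon)$, which is the claimed upper bound. The main obstacle is making the concluding power analysis rigorous at this level of generality --- concretely, translating the informal mean-versus-standard-deviation argument into a statement about the actual rejection region of the Awan--Slavkovi\'c test via the closed-form Tulap CDF of \cite{awan2018differentially}, and handling the mild subtlety that the per-subsample significance level $\alpha_0$ may differ from $\alpha$. Everything else --- validity, $\varepsilon$-DP, and the subsample structure --- is already provided by Theorems~\ref{thm:priv} and~\ref{thm:valid}.
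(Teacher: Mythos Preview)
Your proposal is correct in outline and would yield the claimed bound, but it takes a genuinely different route from the paper. The paper does not carry out a direct signal-versus-noise analysis; instead it replaces the Tulap noise by Laplace (transferring back at the end via the UMP property) and argues by \emph{scaling invariance}: if power $\rho$ is attainable at some baseline $\varepsilon^*$ with $m$ groups of size $n$, then at $\varepsilon' = \varepsilon^*/k$ one takes $m' = km$ groups and observes that $\textup{Lap}(1/\varepsilon')/m' = \textup{Lap}(1/\varepsilon^*)/m$ is unchanged while $\textup{Var}[a/m]$ only shrinks, so power can only increase. This sidesteps any quantile or concentration calculation entirely. Your approach, by contrast, fixes $m = C/\varepsilon$ up front and compares the mean gap $m(\rho-\alpha_0)$ to the combined standard deviation $O(\sqrt{m}+1/\varepsilon)$, which is more explicit about where the $1/\varepsilon$ factor arises but, as you note yourself, leaves a tail-bound step to be filled in for the Tulap and binomial components. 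Both arguments are valid; the paper's is shorter and avoids the concentration bookkeeping you flag as the main obstacle, while yours makes the dependence on $\varepsilon$ and on the sub-test power gap $\rho-\alpha_0$ more transparent.
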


The focus of this work is not asymptotic performance, but practical performance on small $n$, and for that analysis we need an exact computation of the power of any ToT instantiation.

\begin{theorem} \label{thm:power}
Let $\theta$ be the power of the public test $\tau$ in each of the $m$ subsamples with significance level $\alpha_0$. Let $A \sim \textup{Binomial}(m,\theta)$, $Z|A \sim \textup{Tulap}(A, e^{-\varepsilon})$, $B \sim \textup{Binomial}(m, \alpha_0)$, and $N \sim \textup{Tulap}(0, e^{-\varepsilon})$.
%Let $A$, $B$, $N$, and $Z$ be as in Lemmas \ref{lma:AZ} and \ref{lma:BN}. 
Then the power of our test is
$$\mathcal{P}(\varepsilon, \alpha, m, \alpha_0, \theta) = (1 - F_{Z}(F^{-1}_{B+N}(1-\alpha))).$$
\end{theorem}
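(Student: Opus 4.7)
The plan is to chase definitions through Algorithm \ref{alg:general}, reducing the statistical power under the alternative to a single CDF evaluation. Specifically, I will (i) identify the distribution of the count of subsample rejections $a$, (ii) observe that the privatized statistic $z$ defined by the algorithm has exactly the mixture distribution $Z$ described in the theorem, (iii) translate the p-value rule into a tail event on $z$, and (iv) combine these facts.

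For step (i), the partition produces $m$ disjoint, independently drawn subsamples of equal size, so the rejection indicators $\mathbb{1}[p_j < \alpha_0]$ are i.i.d.\ Bernoulli. By hypothesis each subsample test rejects at level $\alpha_0$ with probability $\theta$ under the alternative, so $a \sim \textup{Binomial}(m,\theta) \stackrel{d}{=} A$. For step (ii), the algorithm sets $z \mid a \sim \textup{Tulap}(a, e^{-\varepsilon})$, which is exactly the conditional distribution defining $Z$; combined with (i), $z$ and $Z$ share the same marginal distribution.

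For step (iii), Algorithm \ref{alg:general} rejects whenever the output p-value $P(B+N \geq z) = 1 - F_{B+N}(z)$ falls below $\alpha$. Because the Tulap distribution has a density, $B+N$ is a continuous random variable with continuous, strictly increasing CDF, so the rejection event is simply $\{z > F^{-1}_{B+N}(1-\alpha)\}$. Assembling,
$$\mathcal{P}(\varepsilon,\alpha,m,\alpha_0,\theta) = \Pr\!\left[z > F^{-1}_{B+N}(1-\alpha)\right] = \Pr\!\left[Z > F^{-1}_{B+N}(1-\alpha)\right] = 1 - F_Z\!\left(F^{-1}_{B+N}(1-\alpha)\right).$$

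I do not anticipate a hard step here: the result is essentially bookkeeping that leans on two structural facts, namely that the subsample tests are i.i.d.\ and that conditional on $a$ the released $z$ is pure Tulap noise around $a$. The only place requiring care is the quantile inversion in step (iii), which works cleanly because of Tulap continuity. A minor caveat is that if some subsamples were too small to run $\tau$ (so their $p_j$ came from $\textup{Unif}(0,1)$, and hence rejected at rate $\alpha_0$ rather than $\theta$), step (i) would yield a Poisson-binomial rather than a binomial; the statement implicitly assumes this does not happen, and the rest of the argument is unaffected.
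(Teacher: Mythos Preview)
Your proposal is correct and follows essentially the same route as the paper's proof: identify $a\sim\textup{Binomial}(m,\theta)$ from i.i.d.\ subsample rejections, recognize $z$ as having the law of $Z$, write the p-value as $1-F_{B+N}(z)$, and invert at level $\alpha$. The paper additionally cites two lemmas expressing $F_Z$ and $F_{B+N}$ as finite sums for computation, but the core argument is the same bookkeeping you outline.
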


The proof of this theorem is messy, so for clarity we consign it to Appendix \ref{sec:proofs}.  Note that $F_{B + N}^{-1}$ does not have a known analytic form, so when computing the power via Theorem \ref{thm:power}, the quantiles of the distribution must be determined numerically.

If one is interested in a particular public hypothesis test with known characteristics, the above result can be used to determine a bound on the sample size required for the privatized test to achieve $\rho$ power.  (Simple proof in Appendix \ref{sec:proofs}.)

\begin{corollary} \label{cor:general_power}
Suppose that a public hypothesis test $\tau$ requires at most $n$ data points to achieve $\theta$ power at a significance level $\alpha_0$ for any choice of the data. Then, in order for the private test with privacy parameter $\varepsilon$ to achieve $\rho$ power at a significance level $\alpha$, the necessary number of data points is bounded above by $n\tilde{m}$, where $\tilde{m}$ is the smallest $m$ such that $ \rho \leq \mathcal{P}(\varepsilon, \alpha, m, \alpha_0, \theta)$.
\end{corollary}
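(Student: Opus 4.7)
The plan is to use Theorem~\ref{thm:power} together with a monotonicity argument. Suppose we have $n\tilde{m}$ data points in total and partition them into $\tilde{m}$ subsamples of size $n$ each. By hypothesis, the public test $\tau$ achieves at least $\theta$ power when given $n$ data points, so each of the $\tilde{m}$ subsample tests rejects with probability at least $\theta$ under the alternative. Thus, if $A'$ denotes the actual (random) number of rejections in the subsamples, then $A'$ stochastically dominates a $\textup{Binomial}(\tilde{m},\theta)$ random variable $A$.

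Next I would invoke Theorem~\ref{thm:power} with the caveat that the power expression there was derived under the assumption that each subsample has exactly power $\theta$. I would argue that the true power of the ToT test is $\mathcal{P}(\varepsilon,\alpha,\tilde{m},\alpha_0,\theta')$ for some $\theta'\ge\theta$, and then appeal to the fact that $\mathcal{P}(\varepsilon,\alpha,m,\alpha_0,\cdot)$ is monotonically nondecreasing in its last argument. This monotonicity follows because increasing $\theta$ makes $A$ stochastically larger, which in turn makes $Z\mid A\sim\textup{Tulap}(A,e^{-\varepsilon})$ stochastically larger (Tulap is a location family), so $F_Z$ is pointwise nonincreasing in $\theta$, and hence $1-F_Z(F^{-1}_{B+N}(1-\alpha))$ is nondecreasing in $\theta$. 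Note $B+N$ does not depend on $\theta$, so the critical value is unchanged.

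Combining these observations, the power of the ToT test with $n\tilde{m}$ data points is at least $\mathcal{P}(\varepsilon,\alpha,\tilde{m},\alpha_0,\theta)$, which by definition of $\tilde{m}$ is at least $\rho$. Therefore $n\tilde{m}$ data points suffice to attain power $\rho$, which is the claimed upper bound.

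The main obstacle I anticipate is cleanly justifying the stochastic dominance and the monotonicity of $\mathcal{P}$ in $\theta$; the rest is bookkeeping. A small subtlety is that Theorem~\ref{thm:power} is stated for a fixed per-subsample power $\theta$, so one has to be careful that the worst-case instantiation of $\tau$ (achieving exactly $\theta$ power) gives a valid lower bound on the ToT power, which is precisely what the monotonicity provides.
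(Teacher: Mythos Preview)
Your proposal is correct and follows essentially the same approach as the paper: partition $n\tilde{m}$ points into $\tilde{m}$ blocks of size $n$, note the per-block rejection probability is some $\theta'\ge\theta$, and use monotonicity of $\mathcal{P}$ in its last argument to conclude $\mathcal{P}(\varepsilon,\alpha,\tilde{m},\alpha_0,\theta')\ge\mathcal{P}(\varepsilon,\alpha,\tilde{m},\alpha_0,\theta)\ge\rho$. Your stochastic-dominance justification for the monotonicity (via $A$ being binomial and Tulap being a location family) is in fact more careful than the paper's, which simply asserts that the distribution of $Z$ ``shifts further away from the null'' when $\theta$ increases; the paper also routes through auxiliary quantities $m'$ and $m^*$ that you bypass, but the substance is the same.
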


Since the power $\mathcal{P}$ is strictly increasing with respect to $m$, it is straightforward to determine $\tilde{m}$ numerically. This allows general statements about how much more data a private test will need compared to the equivalent public test.  Some examples are shown in Table 1.  For example, the first row shows that \textit{any} public test that achieves 80\% power at $\alpha=0.05$ can be privatized at $\varepsilon=1$ (by using exactly that public test as the subtest) to get the same power and significance with $\tilde{m} = 5$, meaning that the private test needs 5 times the data of the public test.  For 95\% power 6 times the data of the public test is needed.  (For $\varepsilon = 0.01$ those multiples are 44x and 52x respectively.)

We stress that these general statements, while they are very strong, are only upper bounds.  That is because without specifying a test, one cannot say what would happen when the $\alpha_0$ for the subtests is different than the $\alpha$ one is attempting to achieve in the overall test.  Given any particular test, one can vary $\alpha_0$ and find better settings.  For example, a z-test with $\alpha=0.05$ run on data with an effect size of 0.65 standard deviations will reach 80\% power at $n=20$, meaning that the statement above would guarantee no more than $n=100$ needed to get the same power in the $\varepsilon=1$ private setting.  But allowing $\alpha_0$ to take values other than 0.05, we find that one can actually do this with $n=70$, meaning a $3.5\times$ cost of privacy, rather than $5\times$.  At $\varepsilon=0.01$, it requires $n=420$, for a $21 \times$ cost of privacy, instead of the 44 given by the upper bound in the table.

As another example, take an ANOVA test with three groups run on data with equal within-group and between-group variance.  The upper bounds in the table for 95\% power require $6\times$ data at $\varepsilon=1$ and $52\times$ data at $\varepsilon=0.01$, but the optimized test requires $3.6\times$ and $41\times$ data instead.

\begin{table}
\centering
\begin{tabular}{ll|lll|l}
$\theta$ & $\alpha_0$  & $\rho$ & $\alpha$ & $\varepsilon$ & $\tilde{m}$  \\
\hline
0.80     & 0.05         & 0.80      & 0.05     & 1             & 5     \\
0.80     & 0.05         & 0.80      & 0.05     & 0.1           & 44     \\
0.95     & 0.05        & 0.95      & 0.05     & 1             & 6  \\
0.95     & 0.05        & 0.95      & 0.05     & 0.1             & 52 
\end{tabular}
\caption{For a public test $\tau$ that requires achieves $\theta$ power at significance level $\alpha_0$, in order for our $\varepsilon$-private test to achieve $\rho$ power at a significance level of $\alpha$, we require at most a factor of $\tilde{m}$ more data.}
\label{tbl:pow_thm}
\end{table}

\subsection{Optimization} \label{fwk:optim} 

The variables $m$ (the number of subsamples) and $\alpha_0$ (the sub-test significance threshold) must be optimized. Fortunately, we find that doing an extremely thorough optimization for these parameters is not necessary. The optimal $m,\alpha_0$ combination for one effect size generally does an adequate job across a large range of effect sizes, with a decrease in power generally in the range of 1 to 2\%. 

When an approximate expected effect size is known, we can easily compute $\theta$, the power of the public test $\tau$, for any sample size. For fixed $m$, standard techniques can be used to find the $\alpha_0$ that maximizes $\mathcal{P}(\varepsilon, \alpha, m, \alpha_0, \theta)$.  This can then be repeated for all $m$ in a reasonable set to find the otpimal $m,\alpha_0$ pair. For our simulations, we use the set $\{1, 2, 3, \ldots, \lfloor{\sqrt{n}}\rfloor, \ldots \lfloor{\frac{n}{3}}\rfloor, \lfloor{\frac{n}{2}}\rfloor, n\}$ and find that this process takes less than 20 seconds for a t-test at $n = 100$.

In practice, however, an approximate expected effect size is often not known a priori. In this setting, we suggest fixing a desired power, $\rho$, and optimizing for the $m,\alpha_0$ pair that minimizes the effect size detectable with $\rho$ power. This can be achieved by beginning with a grid of effect sizes and performing a binary search, using the above process for known effect size at each step, to find the minimum effect size in the grid detectable with $\rho$ power. Then use the $m,\alpha_0$ from that combination to run the test of tests. In our simulations, we use a length-$16$ grid and find this process takes less than a minute and a half for a t-test at $n = 100$.

We note, interestingly, that the optimization tends to favor high values of $m$, with very small subsamples and high significance thresholds on the subtests.  It turns out that aggregating a large number of minimally-informative tests is preferable to a small number of more reliable tests.

\section{Comparison to  Pe{\~n}a-Barrientos Framework} \label{sec:PB}

Pe{\~n}a and Barrientos \citep{pena2022differentially}, simultaneously to this work, proposed their own framework (henceforth referred to as PB) for privatizing arbitrary public hypothesis tests.  Like our framework, theirs follows the subsample-and-aggregate idea first mentioned by Canonne et al.~\citep{canonne2019structure}.  

Both methods begin by running the public test on subsamples of the data set, but the methods of aggregation are different.   PB develop what is essentially a custom-built binomial test based on a randomized response-type method.  We instead use the Awan and Slavkovi\'{c} binomial test, which is provably optimal.  As a result, our framework is the highest-power framework possible within this general type of design.

\begin{theorem} \label{thm:PB_vs_ToT}  For any choice of public test $\tau$ and privacy parameter $\epsilon$, the statistical power of the private test resulting from the ToT framework will be higher than that resulting from the PB framework.
\end{theorem}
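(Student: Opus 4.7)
The plan is to reduce the comparison to a head-to-head contest between two $\varepsilon$-differentially private tests for the \emph{same} binomial hypothesis testing problem, and then invoke the optimality of the Awan--Slavkovi\'{c} test.

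First, I would observe that both frameworks share an identical first stage: given a database $\data$ of size $n$, a choice of $m$ and $\alpha_0$, the data is partitioned into $m$ subsets, the public test $\tau$ is run on each subset at level $\alpha_0$, and the number of rejections $a$ is recorded. Under $H_0$, $a$ is (stochastically dominated by) a draw from $\textup{Binomial}(m,\alpha_0)$; under $H_A$, $a \sim \textup{Binomial}(m,\theta)$ for some $\theta > \alpha_0$ determined by the power of $\tau$ in each subsample. Thus, both frameworks ultimately reduce to an $\varepsilon$-DP test of the one-sided binomial hypothesis $H_0: p \leq \alpha_0$ versus $H_A: p > \alpha_0$ based on the observed count $a$.

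Next, I would verify that the PB aggregation fits exactly into this reduction. As noted in the preceding paragraph of the paper, PB's aggregation is ``a custom-built binomial test based on a randomized-response-type method,'' applied to the same count $a$ and tested against the same binomial null at the same level $\alpha$. Since changing one row of $\data$ can change the value of $a$ by at most one (each row belongs to exactly one subset, and $\tau$'s output on that subset is the only thing affected), the sensitivity structure is identical for both frameworks, so PB's test is indeed an $\varepsilon$-DP level-$\alpha$ test of the binomial hypothesis $H_0:p\le\alpha_0$ against $H_A:p>\alpha_0$ applied to $a$. At this point, the key step is to invoke Theorem 7.2 of Awan and Slavkovi\'{c}, which establishes that the Tulap-based test used inside ToT is the uniformly most powerful $\varepsilon$-DP test for this exact hypothesis at level $\alpha$. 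Consequently, for every fixed choice of $(m,\alpha_0)$, the ToT rejection probability under $H_A$ is at least that of PB.

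Finally, I would lift this pointwise domination to the optimized frameworks. Letting $\mathcal{P}_{\text{ToT}}(m,\alpha_0)$ and $\mathcal{P}_{\text{PB}}(m,\alpha_0)$ denote the respective powers, we have $\mathcal{P}_{\text{ToT}}(m,\alpha_0) \geq \mathcal{P}_{\text{PB}}(m,\alpha_0)$ for all admissible $(m,\alpha_0)$. Taking the supremum over the same parameter space on both sides yields
\[
\sup_{m,\alpha_0} \mathcal{P}_{\text{ToT}}(m,\alpha_0) \;\geq\; \sup_{m,\alpha_0} \mathcal{P}_{\text{PB}}(m,\alpha_0),
\]
which is the claim. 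The main obstacle is the middle step: one must carefully confirm that PB's aggregation is indeed an $\varepsilon$-DP test of the same binomial hypothesis applied to $a$, so that Awan--Slavkovi\'{c} optimality is directly applicable. If PB's method uses a slightly different null (e.g.\ an exact equality rather than an inequality) or constrains the feasible parameter space in ways ToT does not, a short auxiliary argument showing the two formulations coincide, or that ToT still dominates on PB's restricted parameter set, will be needed. Once that is settled, the rest of the argument is a direct application of optimality and monotonicity of the supremum.
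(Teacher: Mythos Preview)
Your proposal is correct and follows essentially the same argument as the paper's own proof: fix $(m,\alpha_0)$, observe that both frameworks coincide up to the aggregation step so that the remainder is an $\varepsilon$-DP binomial test on the count $a$, invoke the Awan--Slavkovi\'{c} uniformly-most-powerful result to get pointwise domination, and then pass to the optimized parameters. The paper's version is terser and does not spell out the sensitivity check or the potential caveats you flag, but the route is identical.
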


\begin{proof} Fix a number of subtests $m$ and subtest significance threshold $\alpha_0$. Then the higher power for test of tests is an immediate consequence of the main result of Awan and Slavkovi\'{c} \citep{awan2018differentially}.  Up until the end of the for-loop in Algorithm \ref{alg:general}, the two frameworks are identical, and the remainder of the algorithm can be viewed as a binomial test for whether the proportion of ``reject'' decisions in subtests is greater than $\alpha_0$.  Because the  Awan and Slavkovi\'{c} binomial test is proven to be the uniformly most powerful test in this situation, it must be higher power than the actions performed by the PB framework.

Allowing test of tests to use it's own optimal $m$ and $\alpha_0$ (rather than matching that of PB) can only increase the gap between the two frameworks, since the optimal values might differ.
\end{proof}

The PB framework has drawbacks beyond the simple lowering of power.  The details of the test mean that it cannot get valid results at all parameter settings.  In particular, there is a minimum $m$ value at which the test can be run.  Since the public test itself often requires a certain amount of data, this means that a meaningful amount of data is sometimes required before the PB test can be used at all.  (For example, with $\varepsilon=1$ and $\alpha=0.05$, PB requires $m \geq 7$, increasing to $m\geq 67$ when $\varepsilon=0.1$.)

\begin{figure}[t]
    \centering
    \includegraphics[width=\columnwidth]{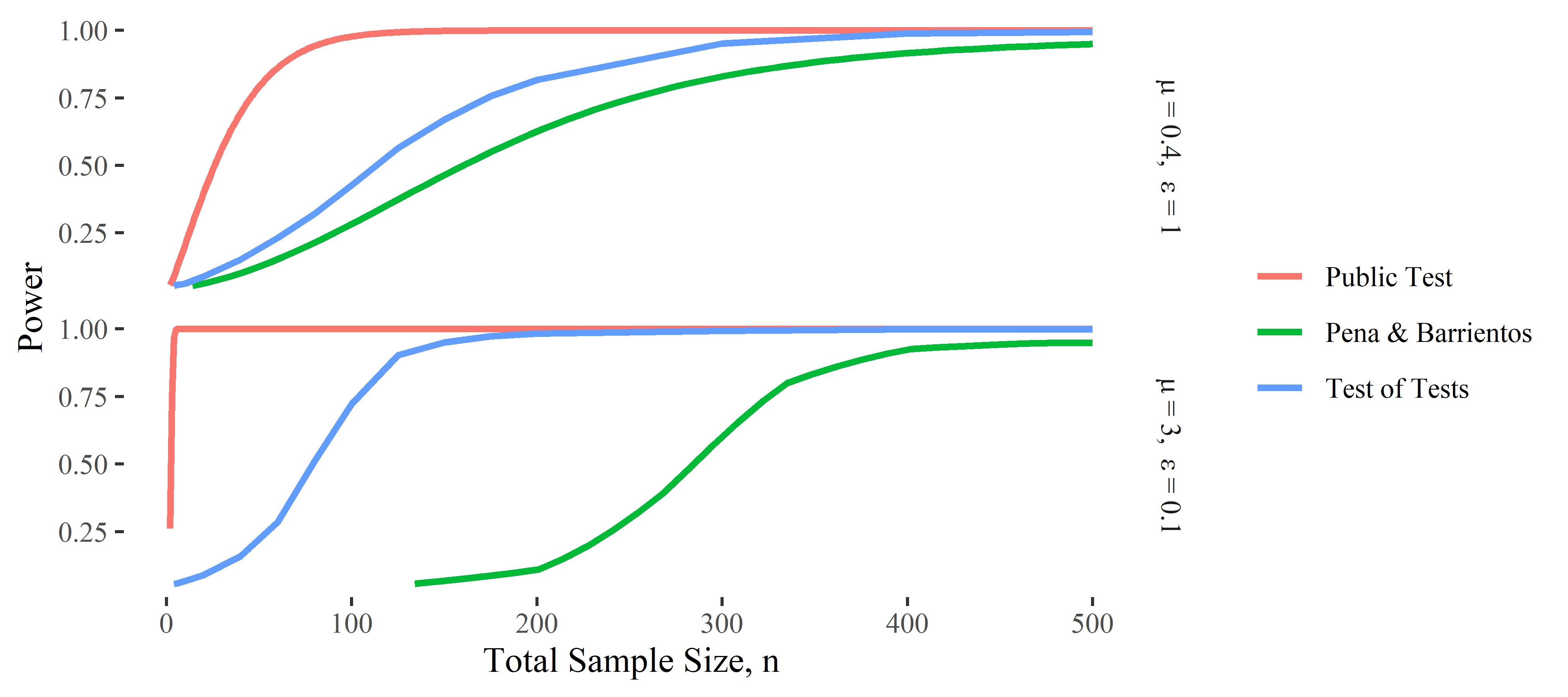}
    \caption{Power comparison between \cite{pena2022differentially} and the test of tests for various sample sizes $n$ with $\alpha = 0.05$ and a t-test with $\sigma = 1$. The top panel has an effect size of $\mu = 0.4$ and $\varepsilon = 1$; the bottom has $\mu = 3$ and $\varepsilon = 0.1$. We optimize $m,\alpha_0$ for the test of tests as discussed in Section \ref{fwk:optim} with target $\rho = 0.9$.}
    \label{fig:PB_vs_ToT}
\end{figure}

Furthermore, given values of the other parameters, $\alpha_0$ \textit{must} be set to a specific value so that the resulting $\alpha$ of the larger test is accurate.  This removes a degree of freedom in optimization, further worsening power.  PB give two methods of setting parameters.  The first involves no real optimizing at all, suggesting that $m$ be set as low as possible.  This is meant for the ``low-power'' setting, where the goal is to achieve significant power at the lowest possible $n$.  (In contrast, we find that very high $m$ often results in better power.)  We use this method in our comparison calculations, though we note that they also suggest one could calculate power curves at a variety of parameter values and choose the best 
parameters through visual inspection.  This method is necessary to reach high power, because using the lowest possible $m$ results in an upper bound on the power of the PB test, meaning that the power does not approach 100\% as $n$ grows.  (This bound can be as low as 80\% in realistic scenarios.)

In Appendix \ref{sec:proofs} we provide an analogue of our Theorem \ref{thm:power} for the PB test so that we can directly compare power instead of relying upon approximate simulations. Figure \ref{fig:PB_vs_ToT} shows the exact power of PB and of our ToT framework for two examples.  We use a t-test as the public test.  In the top panel, with a moderate effect size and $\varepsilon=1$, the PB framework requires 40\% more data to achieve 80\% power ($n=200$ for ToT, $n=280$ for PB).  In the bottom panel, with a larger effect size and $\varepsilon=0.1$, the difference is much greater.  Here ToT only requires $n=125$ to get 80\% power, while PB cannot be run at all until $n \geq 134$ and doesn't get 80\% power until $n=348$ (a 178\% increase).  Additional comparisons privatizing a z-test and an ANOVA can be found in Appendix \ref{sec:add_figs}.

\section{Comparisons to Tailored Tests}\label{sec:tailored}

In this section, we demonstrate the use of the ToT framework on a selection of hypothesis tests, namely a test for the mean of multivariate normal data and a one-way ANOVA. These tests have both been the subject of prior work, so we can compare our general-purpose technique to tests experts carefully developed for specific situations.

\subsection{Mean of Multivariate Normal Data} \label{ex:norm}

Since the general method we are using was first mentioned (and dismissed) in \cite{canonne2019structure,canonne2019private}, we begin by using our framework to develop a test for the same situation.  Here the analyst observes data drawn from a multivariate normal distribution, $\data = \{ \mathbf{X}_1, \ldots, \mathbf{X}_n \}$ and $\mathbf{X}_i \sim \mathcal{N}_d(\boldsymbol{\mu}, \mathbb{I}_d)$.  The null hypothesis is that $\boldsymbol{\mu} =\mathbf{0}$, while the alternate has $\boldsymbol{\mu} \neq \mathbf{0}$.
A public hypothesis test for this setting uses the test statistic $Z = n\sum_{j = 1}^d \bar{X}_j^2$,
which is known to follow the distribution $\chi^2(df = d)$. 
We will use this test for the p-value computation step in Algorithm \ref{alg:general}. We also compute the power of this test.  (The proof of the theorem below is in Appendix \ref{sec:proofs}.)

\begin{theorem} \label{thm:pow_norm}
Let $F_0$ and $F_A$ be the CDFs of $\chi^2(df = d)$ and $\chi^2 \left(df = d, \lambda = n\sum_{j = 1}^d \mu_j^2 \right)$, respectively, where $n$ is the sample size and $\mu_j$ is the $j^{th}$ entry of $\boldsymbol{\mu}$. Then the power of the public test with significance level $\alpha$ is
$1 - F_{A}(F_0^{-1}(1-\alpha))$.
\end{theorem}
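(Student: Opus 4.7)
The plan is to identify the null and alternative distributions of the test statistic $Z = n\sum_{j=1}^d \bar{X}_j^2$ explicitly, then read off the power from the definition of a level-$\alpha$ test in terms of quantiles.

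First I would exploit the independence structure baked into the model. Since $\mathbf{X}_i \sim \mathcal{N}_d(\boldsymbol{\mu}, \mathbb{I}_d)$, the coordinate-wise averages $\bar X_1,\ldots,\bar X_d$ are mutually independent, with $\bar X_j \sim \mathcal{N}(\mu_j, 1/n)$. Rescaling, $\sqrt{n}\,\bar X_j \sim \mathcal{N}(\sqrt{n}\,\mu_j,\,1)$, and these $d$ random variables are still independent.

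Next I would rewrite the test statistic in the form that makes the relevant distribution visible:
\begin{equation*}
Z \;=\; n\sum_{j=1}^d \bar{X}_j^2 \;=\; \sum_{j=1}^d \bigl(\sqrt{n}\,\bar X_j\bigr)^2.
\end{equation*}
By definition, a sum of squares of $d$ independent $\mathcal{N}(\nu_j,1)$ variables follows a noncentral chi-square distribution with $d$ degrees of freedom and noncentrality parameter $\sum_j \nu_j^2$. Applied with $\nu_j = \sqrt{n}\,\mu_j$, this gives $Z \sim \chi^2\bigl(d,\ \lambda = n\sum_{j=1}^d \mu_j^2\bigr)$ under $H_A$, and the special case $\boldsymbol{\mu}=\mathbf{0}$ recovers the central $\chi^2(d)$ distribution claimed under $H_0$.

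Finally I would convert this into the stated power expression. Because $Z$ has the continuous CDF $F_0$ under $H_0$, the critical value for a right-tailed level-$\alpha$ test is $t^\ast = F_0^{-1}(1-\alpha)$, by the definition of $t^\ast$ in Section \ref{bk:ht}. The power is then
\begin{equation*}
\Pr_{H_A}[\,Z \geq t^\ast\,] \;=\; 1 - F_A(t^\ast) \;=\; 1 - F_A\!\bigl(F_0^{-1}(1-\alpha)\bigr),
\end{equation*}
which matches the claim. The only nontrivial step is the noncentral chi-square identification in the second paragraph; everything else is mechanical once the distribution of $Z$ is pinned down.
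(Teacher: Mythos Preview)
Your proposal is correct and follows essentially the same route as the paper: identify $\sqrt{n}\,\bar X_j \sim \mathcal{N}(\sqrt{n}\,\mu_j,1)$, recognize $Z=\sum_j(\sqrt{n}\,\bar X_j)^2$ as noncentral $\chi^2$ with $\lambda = n\sum_j\mu_j^2$, and read off the power from the right-tail quantile $F_0^{-1}(1-\alpha)$. If anything, you are slightly more careful in stating the independence of the coordinate averages, which the paper leaves implicit.
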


This analytic expression for the power is not needed to perform the test, but having it allows optimization to be done more efficiently and means our figures show the exact power of our test, rather than a Monte Carlo approximation.

\cite{canonne2019private} proposes a computationally efficient private test for this setting\footnote{\cite{narayanan2022private} points out an error in this work, but it is in a second algorithm that is irrelevant to this comparison.} and proves that its asymptotic dependency on $\varepsilon$ and effect size is superior to the strategy we use here.  \cite{narayanan2022private} gives another test with yet better asymptotic performance.  Unfortunately, this algorithm is described only in general asymptotic terms, without the concrete details necessary for implementation.  As a result, we compare to the test given by Canonne et al.

\begin{figure}[t]
    \centering
    \includegraphics[width=\columnwidth]{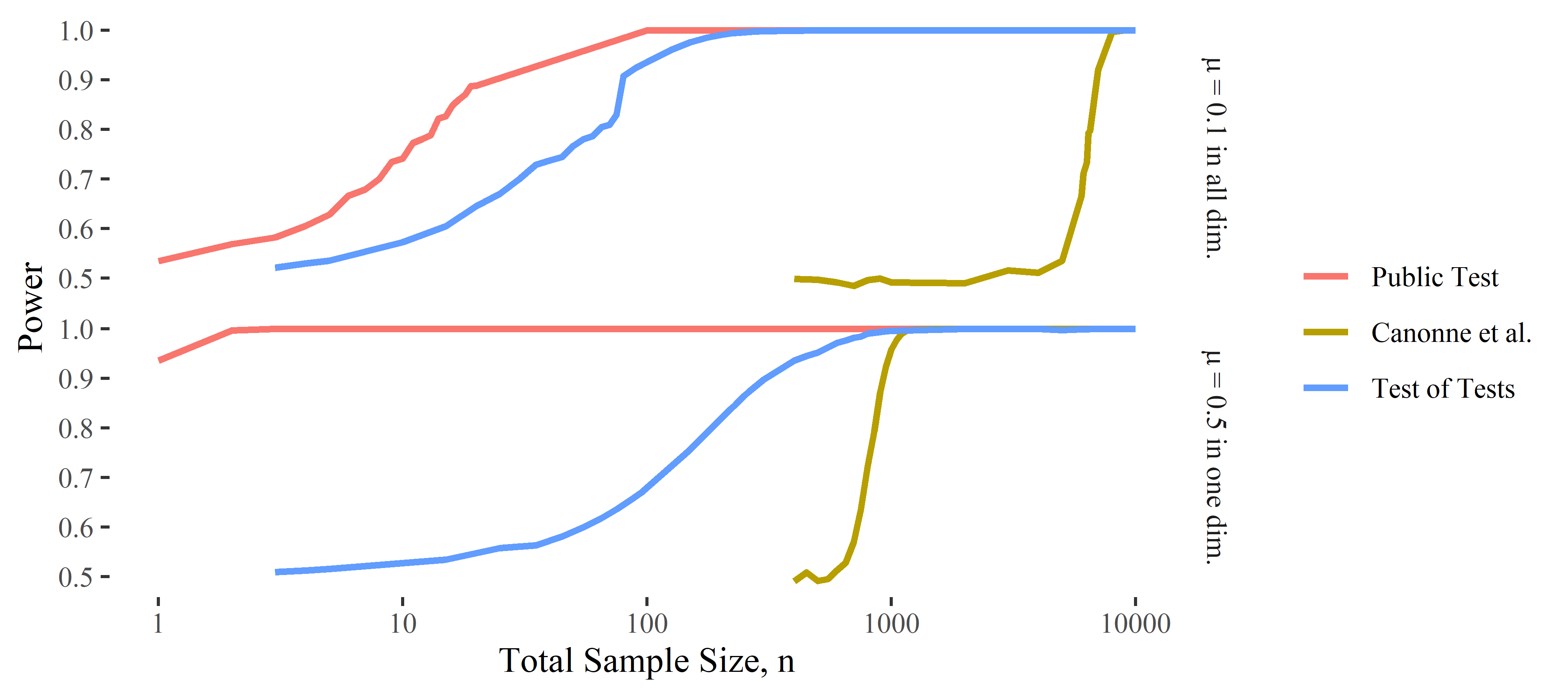}
    \caption{Power comparison between \cite{canonne2019private} and the tests of tests for various sample sizes $n$. For the top panel, $\mu_i = 0.1$ for all $i$; for the bottom, $\mu_1 = 0.5$ and $\mu_i = 0$ for $i \neq 1$. We set $d=100$, $\varepsilon = 1$, and $\alpha$ for the test of tests and public test is set to match the Type I Error of the \cite{canonne2019private} test. We optimize $m,\alpha_0$ for the test of tests as discussed in Section \ref{fwk:optim} with target $\rho = 0.9$ for $n \leq 500$ and $\rho=0.99$ for $n > 500$.}
    \label{fig:Canonne}
\end{figure}

The Canonne test does not have an adjustable $\alpha$ value.  Instead, the analyst is inputs a parameter, $\gamma$,\footnote{We call this parameter $\gamma$, rather than the $\alpha$ from the paper to avoid conflict in notation.} that is a lower bound on the total variation distance between the null distribution $\mathcal{N}_d(\boldsymbol{0}, \mathbb{I}_d)$ and the alternate distribution $\mathcal{N}_d(\boldsymbol{\mu}, \mathbb{I}_d)$.  The test is then guaranteed to distinguish the two distributions with probability 2/3 with a required sample complexity of 
$ \tilde{\mathcal{O}}\left(d^{1/2}/\gamma^2 + d^{1/2}/(\gamma\varepsilon)\right).$ 
This means that Type 1 error will approach 1/3 for sufficiently high $n$, but it can be much higher at low $n$.  In fact, there is a threshold of $\max \left\{25 \log \frac{d}{\delta}, \frac{5}{\varepsilon} \log\frac{1}{\delta}\right\}$ below which the test always rejects (100\% Type 1 error).  Just above this threshold it has Type 1 error of roughly 50\%, where it remains for the entire range of $n$ values we are considering.  (This Type 1 error is proven analytically in Appendix \ref{sec:proofs} and confirmed experimentally in Appendix \ref{sec:add_figs}.)  As a result, we set the $\alpha$ value in our test to 0.5 for a fair comparison, but we note that our test has the advantage that $\alpha$ can be set arbitrarily.

For our comparison, we set $d=100$ and $\varepsilon=1$.  ToT uses pure differential privacy, with $\delta=0$, but Canonne requires a nonzero $\delta$.  We set $\delta=10^{-3}$, which we believe to be very favorable, much higher than is generally considered acceptable in practice.  We set $\gamma=0.1$.  We do not present the power curve for the Canonne test until it stops summarily rejecting all inputs, which for these parameters happens at $n=359$. We consider two possible effects, one where the true mean differs by 0.1 standard deviations in all coordinates, and one where it differs by 0.5 in only a single coordinate.  The results can be seen in Figure \ref{fig:Canonne}.

In the first case, with the larger effect size, we reach 80\% power at $n=65$, while the Canonne test isn't even valid until $n=359$ and doesn't reach 80\% power until $n=6500$.  In the second case, with a smaller effect size, the difference is smaller though still substantial.  ToT requires only 20\% as much data to reach 80\% power --- 190 data points compared to 850.  (At 99\% power, the gap is smaller, with ToT needing 87\% of the data needed by Canonne.) Additional comparisons are provided in Appendix \ref{sec:add_figs}.

Of course, the Canonne test \textit{does} have better asymptotic performance, so there is some sufficiently small $\varepsilon$ and effect size (and sufficiently large $\delta$) such that it becomes the higher-power test.  However, for a wide variety of practical situations, those superior asymptotics have not yet come into play, and ToT is the better choice.

\subsection{One-way Analysis of Variance} \label{ex:anova}

As our second example we consider a one-way ANOVA, which examines whether groups of data have the same mean. Formally, each of $g$ groups has a mean $\mu_g$.  Data within each group is drawn from $\mathcal{N}(\mu_i, \sigma^2)$ for a fixed, unknown $\sigma$.  Under $H_0$ all groups have equal mean, while in $H_A$ some means differ.  The classical test for this setting uses the $F$ statistic, which follows a known distribution under $H_0$.  (For a more thorough introduction, see \cite{rice2007mathstats}.) For this analysis, we focus on the case of equal-sized groups. We call the ratio of the between-group variance and the within-group variance $\eta = \textsf{Var}(\mu_1, \ldots, \mu_g)/\sigma^2$ the effect size. The power of an ANOVA in this setting has a known solution available in most statistical software.

This setting is the subject of a significant line of work.  \cite{campbell2018differentially} give the first private test, which was later improved upon by \cite{swanberg2019improved} and then \cite{couch2019differentially}.  To the best of our knowledge, the private nonparametric test of Couch et al.~is the most powerful private test available in this setting and thus will serve as a benchmark for the performance of the test of tests.

\begin{figure}[t]
    \centering
    \includegraphics[width=\columnwidth]{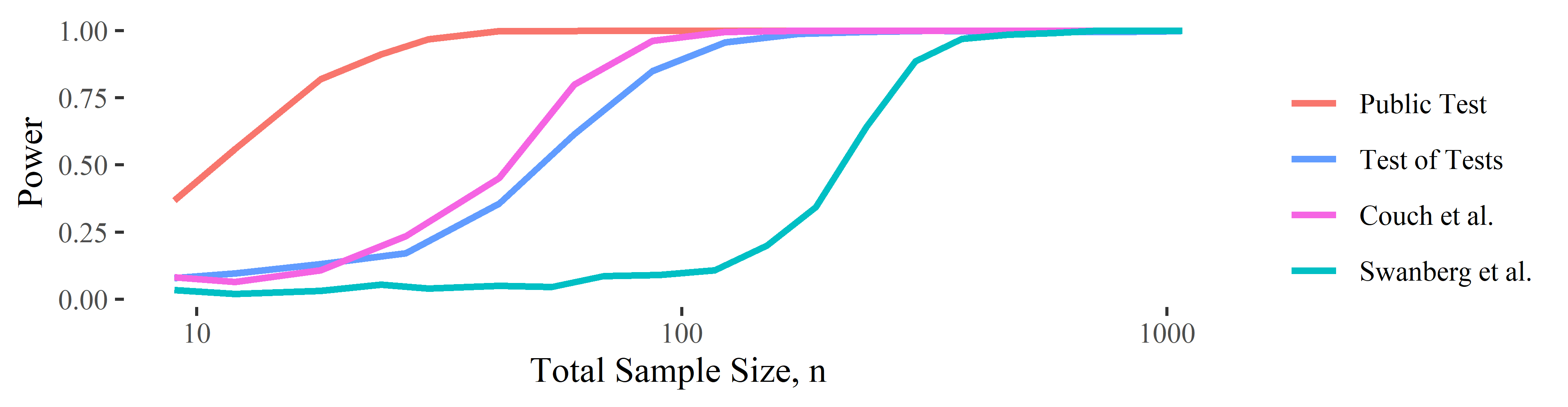}
    \caption{Power comparison between \cite{couch2019differentially}, \cite{swanberg2019improved}, and the test of tests for various sample sizes $n$. The effect size is $\eta = 1$, $\varepsilon = 1$, and number of groups $g = 3$. 
    All groups are of equal size and $\alpha = 0.05$. All non-public power curves are estimated via simulation. We optimize $m,\alpha_0$ for the test of tests as discussed in Section \ref{fwk:optim} with target $\rho = 0.9$.}\label{fig:anova}
\end{figure}
For comparison to the test of Couch et al., we choose the setting in Figure 3 of \cite{couch2019differentially} which examines privacy level $\epsilon = 1$ and effect size $\eta = 1$. As shown in the top panel of Figure \ref{fig:anova}, the test of tests is slightly worse, but the difference is small.  (Couch et al.~require 22\% less data to reach 80\% power.)
It performs much better than the test of Swanberg et al. Importantly, unlike the tests tailored to this setting, the test of tests does not require the estimation of a reference distribution via simulation. Thus, the test of tests is faster to run (and p-values are arguably more accurate).

Varying the setting shows that these two tests are incomparable.  We have included additional examples in Appendix \ref{sec:add_figs}.  With a smaller effect size, the gap between ToT and prior work increases, while a large effect size and/or smaller $\varepsilon$ actually results in ToT becoming the state of the art most powerful test, though by a small margin.  Regardless of the specifics of the comparison, we find it exciting that our general framework is at all comparable to a highly-refined test carefully developed for a specific situation.

\subsection*{Acknowledgments}

We would like to thank Andr\'es Barrientos for sharing code that implements their parallel work and Canyon Foot for writing some of the original functions that we still use. All authors were supported by the National Science Foundation under Grant No. SaTC-1817245. Kaiyan Shi acknowledges additional support from the U.S. Army Research Office under Grant No. W911NF-20-1-0015 and Zeki Kazan from NSF Grant No. SES-2217456.

\bibliography{bibliography.bib}

\begin{thebibliography}{38}
\providecommand{\natexlab}[1]{#1}
\providecommand{\url}[1]{\texttt{#1}}
\expandafter\ifx\csname urlstyle\endcsname\relax
  \providecommand{\doi}[1]{doi: #1}\else
  \providecommand{\doi}{doi: \begingroup \urlstyle{rm}\Url}\fi

\bibitem[Alabi and Vadhan(2022)]{alabi2022hypothesis}
Daniel Alabi and Salil Vadhan.
\newblock Hypothesis testing for differentially private linear regression.
\newblock In Alice~H. Oh, Alekh Agarwal, Danielle Belgrave, and Kyunghyun Cho,
  editors, \emph{Advances in Neural Information Processing Systems}, 2022.

\bibitem[Avella-Medina(2021)]{avella2021privacy}
Marco Avella-Medina.
\newblock Privacy-preserving parametric inference: a case for robust
  statistics.
\newblock \emph{Journal of the American Statistical Association}, 116\penalty0
  (534):\penalty0 969--983, 2021.

\bibitem[Awan and Slavkovi{\'c}(2018)]{awan2018differentially}
Jordan Awan and Aleksandra Slavkovi{\'c}.
\newblock Differentially private uniformly most powerful tests for binomial
  data.
\newblock In \emph{Advances in Neural Information Processing Systems}, pages
  4208--4218, 2018.

\bibitem[Barrientos et~al.(2019)Barrientos, Reiter, Machanavajjhala, and
  Chen]{barrientos2019differentially}
Andr{\'e}s~F Barrientos, Jerome~P Reiter, Ashwin Machanavajjhala, and Yan Chen.
\newblock Differentially private significance tests for regression
  coefficients.
\newblock \emph{Journal of Computational and Graphical Statistics}, pages
  1--24, 2019.

\bibitem[Brenner and Nissim(2010)]{brenner2010impossibility}
Hai Brenner and Kobbi Nissim.
\newblock Impossibility of differentially private universally optimal
  mechanisms.
\newblock In \emph{2010 IEEE 51st Annual Symposium on Foundations of Computer
  Science}, pages 71--80. IEEE, 2010.

\bibitem[Campbell et~al.(2018)Campbell, Bray, Ritz, and
  Groce]{campbell2018differentially}
Zachary Campbell, Andrew Bray, Anna Ritz, and Adam Groce.
\newblock Differentially private anova testing.
\newblock In \emph{2018 1st International Conference on Data Intelligence and
  Security (ICDIS)}, pages 281--285. IEEE, 2018.

\bibitem[Canonne et~al.(2019)Canonne, Kamath, McMillan, Smith, and
  Ullman]{canonne2019structure}
Cl{\'e}ment~L Canonne, Gautam Kamath, Audra McMillan, Adam Smith, and Jonathan
  Ullman.
\newblock The structure of optimal private tests for simple hypotheses.
\newblock In \emph{Proceedings of the 51st Annual ACM SIGACT Symposium on
  Theory of Computing}, pages 310--321, 2019.

\bibitem[Canonne et~al.(2020)Canonne, Kamath, McMillan, Ullman, and
  Zakynthinou]{canonne2019private}
Cl{\'e}ment~L Canonne, Gautam Kamath, Audra McMillan, Jonathan Ullman, and
  Lydia Zakynthinou.
\newblock Private identity testing for high-dimensional distributions.
\newblock \emph{Advances in Neural Information Processing Systems},
  33:\penalty0 10099--10111, 2020.

\bibitem[Couch et~al.(2019)Couch, Kazan, Shi, Bray, and
  Groce]{couch2019differentially}
Simon Couch, Zeki Kazan, Kaiyan Shi, Andrew Bray, and Adam Groce.
\newblock Differentially private nonparametric hypothesis testing.
\newblock In \emph{Proceedings of the 2019 ACM SIGSAC Conference on Computer
  and Communications Security}, pages 737--751, 2019.

\bibitem[Ding et~al.(2018)Ding, Nori, Li, and Allen]{ding2018comparing}
Bolin Ding, Harsha Nori, Paul Li, and Joshua Allen.
\newblock Comparing population means under local differential privacy: with
  significance and power.
\newblock In \emph{Thirty-Second AAAI Conference on Artificial Intelligence},
  2018.

\bibitem[D'Orazio et~al.(2015)D'Orazio, Honaker, and King]{d2015differential}
Vito D'Orazio, James Honaker, and Gary King.
\newblock Differential privacy for social science inference.
\newblock \emph{Sloan Foundation Economics Research Paper}, 2015.

\bibitem[Dwork et~al.(2006)Dwork, McSherry, Nissim, and
  Smith]{dwork2006calibrating}
Cynthia Dwork, Frank McSherry, Kobbi Nissim, and Adam Smith.
\newblock Calibrating noise to sensitivity in private data analysis.
\newblock In \emph{Theory of cryptography conference}, pages 265--284.
  Springer, 2006.

\bibitem[Fienberg et~al.(2011)Fienberg, Slavkovic, and
  Uhler]{fienberg2011privacy}
Stephen~E Fienberg, Aleksandra Slavkovic, and Caroline Uhler.
\newblock Privacy preserving gwas data sharing.
\newblock In \emph{Data Mining Workshops (ICDMW), 2011 IEEE 11th International
  Conference on}, pages 628--635. IEEE, 2011.

\bibitem[Gaboardi et~al.(2016)Gaboardi, Lim, Rogers, and
  Vadhan]{gaboardi2016differentially}
Marco Gaboardi, Hyun-Woo Lim, Ryan~M Rogers, and Salil~P Vadhan.
\newblock Differentially private chi-squared hypothesis testing: Goodness of
  fit and independence testing.
\newblock In \emph{ICML'16 Proceedings of the 33rd International Conference on
  International Conference on Machine Learning-Volume 48}. JMLR, 2016.

\bibitem[Gatignon and Xuereb(1997)]{gatignon1997strategic}
Hubert Gatignon and Jean-Marc Xuereb.
\newblock Strategic orientation of the firm and new product performance.
\newblock \emph{Journal of marketing research}, 34\penalty0 (1):\penalty0
  77--90, 1997.

\bibitem[Homer et~al.(2008)Homer, Szelinger, Redman, Duggan, Tembe, Muehling,
  Pearson, Stephan, Nelson, and Craig]{homer2008resolving}
Nils Homer, Szabolcs Szelinger, Margot Redman, David Duggan, Waibhav Tembe,
  Jill Muehling, John~V Pearson, Dietrich~A Stephan, Stanley~F Nelson, and
  David~W Craig.
\newblock Resolving individuals contributing trace amounts of dna to highly
  complex mixtures using high-density snp genotyping microarrays.
\newblock \emph{PLoS genetics}, 4\penalty0 (8):\penalty0 e1000167, 2008.

\bibitem[Jaworski and Kohli(1993)]{jaworski1993market}
Bernard~J Jaworski and Ajay~K Kohli.
\newblock Market orientation: antecedents and consequences.
\newblock \emph{Journal of marketing}, 57\penalty0 (3):\penalty0 53--70, 1993.

\bibitem[Johnson and Shmatikov(2013)]{johnson2013privacy}
Aaron Johnson and Vitaly Shmatikov.
\newblock Privacy-preserving data exploration in genome-wide association
  studies.
\newblock In \emph{Proceedings of the 19th ACM SIGKDD international conference
  on Knowledge discovery and data mining}, pages 1079--1087. ACM, 2013.

\bibitem[Kramer(1996)]{kramer1996ordered}
Bert Kramer.
\newblock An ordered logit model for the evaluation of dutch non-life insurance
  companies.
\newblock \emph{De Economist}, 144\penalty0 (1):\penalty0 79--91, 1996.

\bibitem[Narayanan and Shmatikov(2008)]{narayanan2008robust}
Arvind Narayanan and Vitaly Shmatikov.
\newblock Robust de-anonymization of large datasets (how to break anonymity of
  the netflix prize dataset).
\newblock \emph{University of Texas at Austin}, 2008.

\bibitem[Narayanan(2022)]{narayanan2022private}
Shyam Narayanan.
\newblock Private high-dimensional hypothesis testing.
\newblock In \emph{Conference on Learning Theory}, pages 3979--4027. PMLR,
  2022.

\bibitem[Nguy{\^e}n and Hui(2017)]{nguyen2017differentially}
Th{\^o}ng~T Nguy{\^e}n and Siu~Cheung Hui.
\newblock Differentially private regression for discrete-time survival
  analysis.
\newblock In \emph{Proceedings of the 2017 ACM on Conference on Information and
  Knowledge Management}, pages 1199--1208. ACM, 2017.

\bibitem[Nissim et~al.(2007)Nissim, Raskhodnikova, and Smith]{nissim2007smooth}
Kobbi Nissim, Sofya Raskhodnikova, and Adam Smith.
\newblock Smooth sensitivity and sampling in private data analysis.
\newblock In \emph{Proceedings of the thirty-ninth annual ACM symposium on
  Theory of computing}, pages 75--84. ACM, 2007.

\bibitem[Pe{\~n}a and Barrientos(2022)]{pena2022differentially}
V{\'\i}ctor Pe{\~n}a and Andr{\'e}s~F Barrientos.
\newblock Differentially private hypothesis testing with the subsampled and
  aggregated randomized response mechanism.
\newblock \emph{arXiv preprint arXiv:2208.06803}, 2022.

\bibitem[Rice(2007)]{rice2007mathstats}
John Rice.
\newblock \emph{Mathematical Statistics and Data Analyss}.
\newblock Brooks/Cole, Belmont, CA, 2007.

\bibitem[Rogers and Kifer(2017)]{rogers2017new}
Ryan Rogers and Daniel Kifer.
\newblock A new class of private chi-square hypothesis tests.
\newblock In \emph{Artificial Intelligence and Statistics}, pages 991--1000,
  2017.

\bibitem[Sheffet(2017)]{sheffet2017differentially}
Or~Sheffet.
\newblock Differentially private ordinary least squares.
\newblock In \emph{Proceedings of the 34th International Conference on Machine
  Learning-Volume 70}, pages 3105--3114. JMLR. org, 2017.

\bibitem[Smith(2008)]{smith2008efficient}
Adam Smith.
\newblock Efficient, differentially private point estimators.
\newblock \emph{arXiv preprint arXiv:0809.4794}, 2008.

\bibitem[Smith(2011)]{smith2011privacy}
Adam Smith.
\newblock Privacy-preserving statistical estimation with optimal convergence
  rates.
\newblock In \emph{Proceedings of the forty-third annual ACM symposium on
  Theory of computing}, pages 813--822. ACM, 2011.

\bibitem[Solea(2014)]{solea2014differentially}
Eftychia Solea.
\newblock Differentially private hypothesis testing for normal random
  variables.
\newblock Master's thesis, Pennsylvania State University, 2014.

\bibitem[Swanberg et~al.(2019)Swanberg, Globus-Harris, Griffith, Ritz, Groce,
  and Bray]{swanberg2019improved}
Marika Swanberg, Ira Globus-Harris, Iris Griffith, Anna Ritz, Adam Groce, and
  Andrew Bray.
\newblock Improved differentially private analysis of variance.
\newblock \emph{Proceedings on Privacy Enhancing Technologies}, 2019\penalty0
  (3):\penalty0 310--330, 2019.

\bibitem[Sweeney(2002)]{sweeney2002k}
Latanya Sweeney.
\newblock k-anonymity: A model for protecting privacy.
\newblock \emph{International Journal of Uncertainty, Fuzziness and
  Knowledge-Based Systems}, 10\penalty0 (05):\penalty0 557--570, 2002.

\bibitem[Uddin and Huynh(2018)]{uddin2018factors}
Majbah Uddin and Nathan Huynh.
\newblock Factors influencing injury severity of crashes involving hazmat
  trucks.
\newblock \emph{International journal of transportation science and
  technology}, 7\penalty0 (1):\penalty0 1--9, 2018.

\bibitem[Uhlerop et~al.(2013)Uhlerop, Slavkovi{\'c}, and
  Fienberg]{uhlerop2013privacy}
Caroline Uhlerop, Aleksandra Slavkovi{\'c}, and Stephen~E Fienberg.
\newblock Privacy-preserving data sharing for genome-wide association studies.
\newblock \emph{The Journal of privacy and confidentiality}, 5\penalty0
  (1):\penalty0 137, 2013.

\bibitem[Vu and Slavkovic(2009)]{vu2009differential}
Duy Vu and Aleksandra Slavkovic.
\newblock Differential privacy for clinical trial data: Preliminary
  evaluations.
\newblock In \emph{Data Mining Workshops, 2009. ICDMW'09. IEEE International
  Conference on}, pages 138--143. IEEE, 2009.

\bibitem[Wang et~al.(2015)Wang, Lee, and Kifer]{wang2015revisiting}
Yue Wang, Jaewoo Lee, and Daniel Kifer.
\newblock Revisiting differentially private hypothesis tests for categorical
  data.
\newblock \emph{arXiv preprint arXiv:1511.03376}, 2015.

\bibitem[Wang et~al.(2018)Wang, Kifer, Lee, and Karwa]{wang2018statistical}
Yue Wang, Daniel Kifer, Jaewoo Lee, and Vishesh Karwa.
\newblock Statistical approximating distributions under differential privacy.
\newblock \emph{Journal of Privacy and Confidentiality}, 8\penalty0 (1), 2018.

\bibitem[Wasserman and Zhou(2010)]{wasserman2010statistical}
Larry Wasserman and Shuheng Zhou.
\newblock A statistical framework for differential privacy.
\newblock \emph{Journal of the American Statistical Association}, 105\penalty0
  (489):\penalty0 375--389, 2010.

\end{thebibliography}
\bibliographystyle{plainnat}

\newpage
\appendix
\onecolumn
\section{Proofs}\label{sec:proofs}

Here we include the proofs that were excluded from the main body.

\subsection{Proofs for Section \ref{sec:framework}}

We now prove that the sample complexity of our test is an $\mathcal{O}(1/\varepsilon)$ factor more than that of the public test.

\begin{reptheorem}{thm:sc}
The sample complexity required for our test to achieve $\rho$ power is $$n = \mathcal{O} \left(\frac{1}{\varepsilon}\right)$$
\end{reptheorem}

\begin{proof}
Consider an alternative test with two changes: we add Laplace noise instead of Tulap noise and we use the proportion below the threshold as our test statistic, rather than the count. In line 6 of Algorithm \ref{alg:general}, we use the alternative test statistic
$$\Tilde{z} = \frac{a+\textup{Lap}\left(\frac{1}{\varepsilon}\right)}{m} = \frac{a}{m}+\frac{\textup{Lap}\left(\frac{1}{\varepsilon}\right)}{m}.$$
This output is guaranteed to be $\varepsilon$-differentially private by Theorems \ref{thm:pp} and \ref{thm:lm}. It has sensitivity $1$ since changing a row in the dataset can only change the p-value in one group and therefore can change $a$ by at most $1$. 

Let $L = \frac{\textup{Lap}\left(\frac{1}{\varepsilon}\right)}{m}$. Then there exists some $\varepsilon = \varepsilon^*$, some number of subgroups $m$, and some subgroup size $n$ such that $L = \frac{\textup{Lap}\left(\frac{1}{\varepsilon^*}\right)}{m}$ is small enough that power $\rho$ can be achieved with sample size $mn$, where $n$ is the number of datapoints in each group and $m$ is the number of groups.

Now let $\varepsilon' = \frac{\varepsilon^*}{k}$ and $m' = km$. By the scaling property of Laplace distribution, we have 
\begin{align*}
     \frac{\textup{Lap}\left(\frac{1}{\varepsilon'}\right)}{m'} = \frac{\textup{Lap}\left(\frac{1}{\varepsilon^*/k}\right)}{km} =  \frac{k\textup{Lap}\left(\frac{1}{\varepsilon^*}\right)}{km} =\frac{\textup{Lap}\left(\frac{1}{\varepsilon^*}\right)}{m}  = L.
\end{align*}
This means that if the number of datapoints in each group, $n$, is unchanged, then for $\varepsilon' =  \frac{\varepsilon^*}{k}$ the noise added is still $L$ if there are $k$ times as many groups.

We then need to consider how the additional groups affect the term $\frac{a}{m}$ (this term is independent of $\varepsilon$). Note that $\textup{E}[\frac{a}{m}] = \theta$, a constant, and so changing the number of groups will have no effect on this expectation. But $\textup{Var}[\frac{a}{m}] = \frac{\theta(1-\theta)}{m}$, so increasing the number of groups by a factor of $k$ will decrease the variance by a factor of $k$. Thus, the distribution of $\tilde{z}$ will have unchanged mean, and will still be distributed according to a binomial distribution, but it will now have lower variance. This means the power of the test must necessarily increase.

This analysis shows that if power $\rho$ can be achieved with $\varepsilon$ at sample size $mn$, then it can also be achieved with $\varepsilon / k $ at sample size smaller than or equal to $kmn$. In other words, the sample complexity is inversely related to $\varepsilon$. This shows that this alternate test has sample complexity $\mathcal{O} \left(\frac{1}{\varepsilon}\right).$ Our test, which uses the uniformly most powerful $\varepsilon$-differentially private binomial test instead of simple Laplace noise, must also have sample complexity $\mathcal{O} \left(\frac{1}{\varepsilon}\right).$
\end{proof}

Now we compute the exact (rather than asymptotic) power of the test.

Let $f_X(x)$ and $F_X(x)$ refer to the probability density function (PDF) and cumulative density function (CDF), respectively, of a random variable $X$. We now establish two lemmas about the CDFs of relevant variables.

\begin{lemma}\label{lma:AZ}
Let $\theta$ be the power of the public test $\tau$ in each of the $m$ subsamples with significance level $\alpha_0$. Let $A \sim \textup{Binomial}(m,\theta)$ and 
let $Z|A \sim \textup{Tulap}(A, e^{-\varepsilon})$. Then the cumulative distribution function of $Z$ is 
$$ F_Z(z) = \sum_{i = 0}^M F_{Z|A}(z | i)~f_A(i).$$
\end{lemma}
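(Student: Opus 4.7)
The plan is to establish this lemma as a direct application of the law of total probability, exploiting the fact that the conditioning random variable $A$ has discrete, finite support. Since $A \sim \textup{Binomial}(m,\theta)$, its support is $\{0, 1, \ldots, m\}$ with probability mass function $f_A(i) = \binom{m}{i} \theta^i (1-\theta)^{m-i}$. (I will assume the $M$ in the statement is a typo for $m$, the number of subsamples.)

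The first step is to write $F_Z(z) = \Pr(Z \leq z)$ and decompose this probability by conditioning on $A$. Concretely, using the partition $\{A = 0\}, \{A = 1\}, \ldots, \{A = m\}$ of the sample space, I would invoke the law of total probability to get
\[
F_Z(z) = \Pr(Z \leq z) = \sum_{i=0}^m \Pr(Z \leq z \mid A = i)\, \Pr(A = i).
\]
The second step is simply to recognize each factor in terms of the notation in the statement: by definition, $\Pr(Z \leq z \mid A = i) = F_{Z|A}(z \mid i)$ and $\Pr(A = i) = f_A(i)$. Substituting gives the claimed identity.

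There is really no main obstacle here; the entire content of the lemma is a restatement of the law of total probability for a continuous random variable conditioned on a discrete one. The only mild care one might take is to note that $Z \mid A = i \sim \textup{Tulap}(i, e^{-\varepsilon})$ is a continuous distribution (so its CDF $F_{Z|A}(\cdot \mid i)$ is well-defined), and that the marginal CDF $F_Z$ exists because the mixture of finitely many continuous distributions is continuous. No measure-theoretic subtleties arise since we are summing over a finite partition.
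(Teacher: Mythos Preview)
Your proposal is correct and essentially the same as the paper's proof: both establish the identity via the law of total probability over the finite support of $A$. The only cosmetic difference is that the paper writes out the marginal density $f_Z(t) = \sum_{i=0}^m f_{Z|A}(t\mid i)\,f_A(i)$ and integrates to obtain the CDF, whereas you apply total probability directly at the level of $\Pr(Z \leq z)$; neither route involves any additional idea.
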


\begin{proof}
Let $f(a, z)$ be the joint probability density function of A and Z. Then the CDF of $Z$ is
\begin{align*}
    F_Z(z) &= \int_{-\infty}^z f_Z(t)~ dt \\
    &= \int_{-\infty}^z \sum_{i = 0}^M f(i, t) ~ dt \\
    &= \sum_{i = 0}^M \int_{-\infty}^z f_{Z|A}(t | i)~f_A(i) ~ dt \\
    &= \sum_{i = 0}^M F_{Z|A}(z | i)~f_A(i)
\end{align*}
\end{proof}

\begin{lemma} \label{lma:BN}
Let $B \sim \textup{Binomial}(m, \alpha_0)$ and $N \sim \textup{Tulap}(0, e^{-\varepsilon})$. Then the cumulative distribution function of $B+N$ is
$$ F_{B + N}(t) = \sum_{i = 0}^M f_B(i) F_N(i - t). $$
\end{lemma}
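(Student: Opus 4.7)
The plan is straightforward: apply the law of total probability by conditioning on the discrete component $B$, and then use the independence of $B$ and $N$ to reduce the conditional probability to a statement about $N$ alone.

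First I would note that $B$ is supported on $\{0,1,\ldots,m\}$, so I can decompose
\[
F_{B+N}(t) = \Pr[B+N \leq t] = \sum_{i=0}^{m} \Pr[B+N \leq t \mid B=i]\,\Pr[B=i].
\]
Next, because $B$ and $N$ come from independent sources of randomness in Algorithm \ref{alg:general} (the counts from the subsample-and-aggregate step vs.\ the Tulap noise), conditioning on $B=i$ replaces $B$ with the constant $i$, so $\Pr[B+N \leq t \mid B=i] = \Pr[N \leq t-i] = F_N(t-i)$. This gives the clean identity
\[
F_{B+N}(t) = \sum_{i=0}^{m} f_B(i)\, F_N(t-i).
\]

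To bring this into the exact form written in the lemma, I would invoke the symmetry of the Tulap distribution centered at $0$: because $N$ is a convolution of a symmetric discrete Laplace with a symmetric uniform, it is continuous and symmetric about $0$, and the standard symmetry identity $F_N(-x) = 1 - F_N(x)$ can be used to rewrite $F_N(t-i)$ in terms of $F_N(i-t)$, matching the stated formula after any constant terms are accounted for by the fact that $\sum_i f_B(i) = 1$.

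The hard part, if there is any, is merely the sign-convention bookkeeping in the symmetry step; the underlying probability manipulation (decomposing over the support of the discrete summand and using independence) is entirely routine and requires no deeper ideas. I expect this to be a two- or three-line proof once the sign conventions are pinned down.
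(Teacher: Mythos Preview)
Your approach matches the paper's: condition on the discrete summand $B$, use independence of $B$ and $N$ to reduce to $F_N$, then appeal to the symmetry of the Tulap distribution about $0$. The paper presents this as a four-line convolution computation of exactly the kind you anticipate, and your caution about the sign bookkeeping is well placed, since the passage from $F_N(t-i)$ to $F_N(i-t)$ via symmetry is precisely where the paper's own derivation is terse.
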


\begin{proof}
By convolution, the CDF of $B+N$ is
\begin{align*}
    F_{B + N}(t) &= P(B + N \leq t) \\
    &= \sum_{i = 0}^M \int_{t - i}^\infty f_B(i) f_N(x) ~ dx \\
    &= \sum_{i = 0}^M f_B(i) (1 - F_N(t - i)) \\ 
    &= \sum_{i = 0}^M f_B(i) F_N(i - t)
\end{align*}
\end{proof}

\begin{reptheorem}{thm:power}
Let $\theta$ be the power of the public test $\tau$ in each of the $m$ subsamples with significance level $\alpha_0$. Let $A \sim \textup{Binomial}(m,\theta)$, $Z|A \sim \textup{Tulap}(A, e^{-\varepsilon})$, $B \sim \textup{Binomial}(m, \alpha_0)$, and $N \sim \textup{Tulap}(0, e^{-\varepsilon})$.
Then the power of our test is
$$\mathcal{P}(\varepsilon, \alpha, m, \alpha_0, \theta) = (1 - F_{Z}(F^{-1}_{B+N}(1-\alpha))).$$
\end{reptheorem}

\begin{proof}
Let $W_i$ be a random variable which outputs $1$ if $p_{i} < \alpha_0$ and $0$ otherwise. It is thus distributed $W_i \sim \textup{Bernoulli} (\theta)$. Then $A = \sum_{i = 1}^m W_i \sim \textup{Binomial}(m, \theta)$ 
is the number of p-values less than $\alpha_0$ and $Z \mid A \sim \textup{Tulap}(A, e^{-\varepsilon})$ is the differentially private estimate of $A$.

$B \sim \textup{Binomial}\left(M, \alpha_0 \right)$ is the number of p-values less than $\alpha_0$ under the null hypothesis and $N \sim \textup{Tulap}(0, e^{-\varepsilon}, 0)$ is the required amount of Tulap noise to maintain $\varepsilon$-differential privacy. For any valid hypothesis test, under the null hypothesis, $\theta = P(p_i < \alpha_0) \leq \alpha_0$. Testing the hypothesis of interest is thus equivalent to testing
\begin{align*}
    H_0 : \theta \leq \alpha_0 \quad \textup{and} \quad H_A : \theta > \alpha_0.
\end{align*}
The p-value for this test is
$$ p(Z) = P(B + N \geq Z \mid Z). $$
The power of this test is then
\begin{align}
    P(p(Z) \leq \alpha) &= P(1 - F_{B+N}(Z) \leq \alpha) \nonumber \\
    &=P(Z \geq F^{-1}_{B+N}(1- \alpha)) \nonumber\\
    &= (1 - F_{Z}(F^{-1}_{B+N}(1-\alpha))). \nonumber
\end{align}
By Lemmas \ref{lma:AZ} and \ref{lma:BN}, $F_Z(z) = \sum_{i = 0}^M F_{Z|A}(z | i)~f_A(i)$ and $F_{B + N}(t) = \sum_{i = 0}^M f_B(i) F_N(i - t)$. This completes the proof.
\end{proof}

Fixing the public test, we then get this corollary, from which we can calculate bounds on the cost of privacy, thought of as the increase in the amount of data needed compared to the non-private test.

\begin{repcorollary}{cor:general_power}
Suppose that a public hypothesis test $\tau$ requires at most $n$ data points to achieve $\theta$ power at a significance level $\alpha_0$ for any choice of the data. Then, in order for the private test with privacy parameter $\varepsilon$ to achieve $\rho$ power at a significance level $\alpha$, the necessary number of data points is bounded above by $n\tilde{m}$, where $\tilde{m}$ is the smallest $m$ such that 
$$ \rho \leq \mathcal{P}(\varepsilon, \alpha, m, \alpha_0, \theta)$$
\end{repcorollary}

\begin{proof}
Consider a database partitioned into $\tilde{m}$ subsets, each of size $n$. When running the public hypothesis test on each subset, the true probability that the p-value is over the threshold $\alpha_0$ is some $\theta^* \geq \theta$. Consider $\mathcal{P}(\varepsilon, \alpha, \tilde{m}, \alpha_0, \theta^*)$. Since the distribution of $Z$ under the alternative hypothesis will shift further away from the null distribution (its center is now $\tilde{m}\theta^* \geq \tilde{m}\theta \geq \tilde{m}\alpha_0$), it follows that
$$ \mathcal{P}(\varepsilon, \alpha, \tilde{m}, \alpha_0, \theta^*) \geq \mathcal{P}(\varepsilon, \alpha, \tilde{m}, \alpha_0, \theta). $$

Now let $m'$ be the smallest $m$ such that
$$ \rho \leq \mathcal{P}(\varepsilon, \alpha, m, \alpha_0, \theta^*).$$
Since $\mathcal{P}$ is strictly increasing as a function of $m$, it follows that $m' \leq \tilde{m}$.

Now consider the true number of datapoints required for the private test to achieve $\rho$ power. I.e., the minimum number of datapoints the test can achieve full power over all choices of $\alpha_0$ and $m$. Formally, we define
\begin{align*}
    m^* = \argmin_{m} \{ \rho \leq \mathcal{P}(\varepsilon, \alpha, m, \alpha_0, \theta^*) \mid m \in \mathbb{N}, \alpha_0 \in [0,1] \}. 
\end{align*}

Then since the test can achieve $\rho$ power with $m'n$ datapoints and $m^*n$ is the minimum number of datapoints required to achieve $\rho$ power, it follows that $m'n \geq m^*n$. Combining this with the early inequality gives $\tilde{m}n \geq m^*n$.
\end{proof}

\subsection{Proofs for Section \ref{sec:PB}}

The results of \cite{awan2018differentially} can be used to show that ToT has higher power than the PB framework.

Here we compute an analytic expression for the power of the test proposed by \cite{pena2022differentially}.

\begin{theorem} \label{thm:PB_power}
Let $\theta$ be the power of the public test $\tau$ with significance level $\alpha_0$. Let $f_{i,p,m}$ be the probability mass function of a Poisson-binomial distribution with a success probability vector of $p$ repeated $i$ times and $1-p$ repeated $m-i$ times. Then the power of the PB test is
$$\mathcal{P}_{PB}(\varepsilon, \alpha, m, p, \alpha_0, \theta) = \sum_{i=0}^m \sum_{j = \frac{m+1}{2}}^m f_{i,p,m}(j) \binom{m}{i} \theta^i(1-\theta)^{m-i}.$$
\end{theorem}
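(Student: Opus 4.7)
\textbf{Proof proposal for Theorem \ref{thm:PB_power}.} The plan is to trace the distribution of the PB test statistic under the alternative by identifying each stochastic layer of the Pe\~na--Barrientos pipeline and then combining them via the law of total probability. As described in Section \ref{sec:PB}, the PB framework runs the public test $\tau$ at sub-test level $\alpha_0$ on each of $m$ subsamples, privatizes each resulting reject indicator via a randomized-response mechanism with parameter $p = p(\varepsilon)$, and aggregates the privatized indicators by majority vote. The three ingredients of the stated formula --- a binomial factor, a Poisson-binomial PMF, and an upper tail starting at $\tfrac{m+1}{2}$ --- correspond exactly to these three stages.

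First I would let $A$ denote the number of subsamples on which the public test rejects at level $\alpha_0$. Under the alternative each subsample rejects independently with probability $\theta$, so $A \sim \textup{Binomial}(m, \theta)$ and $\Pr[A = i] = \binom{m}{i}\theta^i(1-\theta)^{m-i}$, which is the outer factor. Next, conditional on $A = i$, I would describe the randomized-response step: each of the $i$ true ``reject'' indicators is reported truthfully with probability $p$, and each of the $m-i$ true ``accept'' indicators is flipped to ``reject'' with probability $1-p$. The reported reject count $R$ is therefore a sum of $i$ independent $\textup{Bernoulli}(p)$ variables and $m-i$ independent $\textup{Bernoulli}(1-p)$ variables, so $R \mid A = i$ has precisely the Poisson-binomial PMF $f_{i,p,m}$ defined in the statement. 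Since the PB test rejects iff at least a strict majority of the $m$ privatized indicators report ``reject'',
$$ \Pr\!\left[R \geq \tfrac{m+1}{2} \,\middle|\, A = i\right] = \sum_{j = (m+1)/2}^{m} f_{i,p,m}(j). $$
Summing over $i$ by the law of total probability yields the stated expression.

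The routine portion of this argument is just standard conditioning; the main obstacle is faithfully extracting the PB mechanism from \cite{pena2022differentially} and justifying the distributional claims above. In particular, I would need to verify that (i) the randomized response is applied independently to each subsample indicator with a single scalar flipping probability $p$ determined by $\varepsilon$, (ii) the aggregation rule is strict majority with threshold exactly $\tfrac{m+1}{2}$ (so we implicitly restrict to odd $m$, or else specify a tie-breaking convention for even $m$), and (iii) no additional randomness is introduced at the aggregation stage. Once these mechanism details are pinned down, the theorem is a direct distributional bookkeeping exercise with no further analytic content.
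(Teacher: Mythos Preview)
Your proposal is correct and follows essentially the same route as the paper's proof: condition on the raw reject count (which is $\textup{Binomial}(m,\theta)$), observe that the randomized-response output is Poisson-binomial with the stated success vector, apply the strict-majority rejection rule, and combine via the law of total probability. The paper also remarks that PB selects odd $m$ so that $\tfrac{m+1}{2}$ is an integer, which is exactly the point you flagged in (ii).
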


\begin{proof}
    Let $W$ be the number of sub-samples in which the public test is rejected and let $\tilde{W}$ be the number in which the sub-test is rejected after the randomized response mechanism is applied. We begin by considering the probability $H_0$ is rejected conditional on $W = i$, which occurs if and only if $\tilde{W} > \frac{m-1}{2}$. The probability that $\tilde{W} = j$ is given by a Poisson-Binomial distribution with the vector of success probabilities
    \begin{equation*}
        (\underbrace{p, \ldots, p}_{i \textup{ times}}, \underbrace{1-p, \ldots, 1-p}_{m-i \textup{ times}}).
    \end{equation*}
    We let $f_{i,p,m}$ denote this distribution.

    Now consider the overall test. The probability that $W = i$ is given by a binomial distribution with size $m$ and probability $\theta$. Applying the Law of Total Probability then gives,
    \begin{align*}
        Pr\left(\tilde{W} > \frac{m-1}{2}\right) &= \sum_{i=0}^m Pr\left(\tilde{W} > \frac{m-1}{2} \mid W = i \right) \, Pr\left(W = i \right) \\
        &= \sum_{i=0}^m \sum_{j=\frac{m+1}{2}}^m Pr\left(\tilde{W} = j \mid W = i \right) \, \binom{m}{i} \theta^i(1-\theta)^{m-i} \\
        &= \sum_{i=0}^m \sum_{j=\frac{m+1}{2}}^m f_{i,p,m}(j) \binom{m}{i} \theta^i(1-\theta)^{m-i}.
    \end{align*}
    Note that the PB optimization will only select odd $m$, ensuring that $\frac{m+1}{2}$ is an integer. This completes the proof.
\end{proof}

\subsection{Proofs for Section \ref{sec:tailored}}

Here we compute an analytic expression for the power of the public test for deviation of $d$-dimensional Gaussian from a given mean.

\begin{reptheorem}{thm:pow_norm}
Let $F_0$ be the CDF of $\chi^2(df = d)$ and $F_A$ be the CDF of $\chi^2 \left(df = d, \lambda = n\sum_{j = 1}^d \mu_j^2 \right)$, where $n$ is the sample size and $\mu_j$ is the $j^{th}$ entry of $\boldsymbol{\mu}$. Then the power of the public test with significance level $\alpha$ is
$1 - F_{A}(F_0^{-1}(1-\alpha)$.
\end{reptheorem}

\begin{proof}
First, note that $\bar{X}_j \sim \mathcal{N}\left(\mu_j, \sigma = \frac{1}{\sqrt{n}}\right)$. Thus, $\sqrt{n}\bar{X}_j \sim \mathcal{N}\left(\sqrt{n}\mu_j, \sigma = 1 \right)$. It follows that, for the test statistic,
$$Z = n\sum_{j = 1}^d \bar{X}_j^2 = \sum_{j = 1}^d (\sqrt{n}\bar{X}_j)^2 \sim \chi^2(df = d, \lambda).$$
Under the null hypothesis, $\lambda = 0$. But if $\boldsymbol{\mu} \neq \mathbf{0}$, 
$$ \lambda = \sum_{j = 1}^d(\sqrt{n}\mu_j)^2 = n\sum_{j = 1}^d \mu_j^2. $$
Let $Z$ be the observed test statistic and $P$ be the corresponding p-value. For significance level $\alpha$, the power of the test is thus
\begin{align*}
    Pr(P \leq \alpha) &= Pr(F_0(Z) \geq 1- \alpha) \\
    &= Pr(Z \geq F^{-1}_0(1-\alpha) \\
    &= 1 - F_{A}(F_0^{-1}(1-\alpha)).
\end{align*}

\end{proof}

Here we compute a lower bound on the Type 1 error of the Canonne et al.~test.

\begin{theorem} \label{thm:canonne_TypeI}
    Let $L \sim \textup{Laplace}(b)$, where 
    \begin{align*}
        b &= \left(5 \Delta_\delta^G+\frac{432 d}{\varepsilon} \ln \frac{n d}{\delta} \sqrt{\ln \frac{n}{\delta} \cdot \ln \frac{5}{4 \delta}}\right) / \varepsilon 
    \end{align*}
    and $\Delta_\delta^G$ is as defined in Algorithm 4 of \cite{canonne2019private}. Let $Z \sim \mathcal{N}(0, \sigma = n\sqrt{2d})$. Then the Type I Error of the Canonne et al.~algorithm is bounded below by
    \begin{equation*}
        1 - F_{Z + L}\left(\frac{n^2\gamma^2}{324} \right).
    \end{equation*}
\end{theorem}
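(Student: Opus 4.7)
The plan is to unpack Algorithm 4 of \cite{canonne2019private} to identify its noised test statistic and rejection threshold, characterize the statistic's distribution under $H_0$, and combine it with the Laplace noise to produce the desired lower bound on the Type I error.

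First, I would identify the components of the Canonne test. Their algorithm computes a centered quadratic-form statistic $T(\data)$ built from the sum of squared coordinate-wise means (essentially of the form $n\|\bar{\mathbf{X}}\|^2 - d$, up to their specific normalization), adds independent Laplace noise $L \sim \textup{Laplace}(b)$ with the noise scale $b$ given in the theorem statement (coming from their sensitivity analysis), and rejects $H_0$ exactly when the noised statistic exceeds $n^2\gamma^2/324$, where the threshold is calibrated for total-variation-distance $\gamma$. Thus the Type I error equals $\Pr[T + L > n^2\gamma^2 / 324 \mid H_0]$.

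Second, I would characterize the null distribution of $T$. When $\mathbf{X}_i \stackrel{\textup{iid}}{\sim} \mathcal{N}_d(\mathbf{0}, \mathbb{I}_d)$, the centered statistic is distributed as a scaled-and-shifted $\chi^2_d$ with mean $0$ and variance $2n^2 d$. For large $d$ (e.g.\ $d=100$ in the simulations), the CLT gives the Gaussian approximation $T \approx Z \sim \mathcal{N}(0, n\sqrt{2d})$. The structural observation driving the lower bound is that $\chi^2_d$ is right-skewed, so the upper tail of $T$ dominates that of $Z$: for every $t \geq 0$ one has $\Pr[T > t] \geq \Pr[Z > t]$. This first-order stochastic dominance in the right tail is preserved under convolution with the independent symmetric noise $L$, giving
\begin{equation*}
\Pr[T + L > c] \;\geq\; \Pr[Z + L > c] \;=\; 1 - F_{Z+L}(c)
\end{equation*}
for every $c$. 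Taking $c = n^2\gamma^2/324$ yields the claimed bound.

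The main obstacle will be rigorously establishing the right-tail stochastic dominance of $T$ over $Z$. The cleanest route is a direct comparison of tails via a Laurent--Massart-type inequality, which shows that the upper tail of $(\chi^2_d - d)/\sqrt{2d}$ exceeds that of a standard normal; alternatively, one can quantify the CLT gap via Berry--Esseen and absorb a vanishing correction into the statement. A secondary bookkeeping issue will be matching the identification of the test statistic and the constant $324$ to the precise form of Algorithm 4 in \cite{canonne2019private}, since the theorem's constants depend on their exact calibration choices; checking this is tedious but mechanical.
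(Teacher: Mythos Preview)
Your proposal misidentifies where the ``lower bound'' comes from, and as a result the entire tail-dominance program is both unnecessary and aimed at the wrong target. In the paper's proof, the inequality arises not from comparing a $\chi^2$ tail to a Gaussian tail, but from the structure of Algorithm~4 itself: Stage~1 contains screening conditions, and whenever any of them fails the algorithm summarily rejects $H_0$. Hence the rejection probability contributed by Stages~2--3 alone is automatically a \emph{lower} bound on the overall Type~I error. Once you restrict attention to Stages~2--3, no further inequality is needed.

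Two further points you are missing. First, Stage~2 replaces each row either by a fresh draw from $\mathcal{N}_d(\mathbf{0},\mathbb{I}_d)$ or by the original row; under $H_0$ both have the same distribution, so $\hat{X}$ is distributed exactly as $X$ and the clipping step is a no-op in law. You do not discuss this step at all. Second, the paper obtains the Gaussianity of $T(\hat{X})$ under $H_0$ directly by citing Theorem~B.2 of \cite{canonne2019private}, not via a CLT approximation of a centered $\chi^2$. Your guess that $T$ is ``essentially $n\|\bar{\mathbf{X}}\|^2 - d$'' is off in scale (it would give standard deviation $\sqrt{2d}$, not $n\sqrt{2d}$), and in any case the right move is simply to invoke Canonne et al.'s own distributional result rather than to reverse-engineer the statistic and then repair the approximation with Laurent--Massart or Berry--Esseen. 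That machinery is not needed, and the ``right-tail dominance preserved under convolution'' step you flag as the main obstacle never has to be proved.
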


\begin{proof}
    We begin with Stage 2 of Canonne et al.'s Algorithm 4. In Stage 1, any condition that fails results in a rejection of the null hypothesis, which implies that the Type I Error resulting from the final steps is in a lower bound on the overall Type I Error. In Stage 2, for each row $j \in \{1, \ldots, n\}$, the algorithm will either draw $\hat{X}^{(j)}$ from $\mathcal{N}_d(\mathbf{0}, \mathbb{I}_d)$ or set $\hat{X}^{(j)} = X^{(j)}$, the original row. But under the null hypothesis, $X^{(j)} \sim \mathcal{N}_d(\mathbf{0}, \mathbb{I}_d)$, so either way $\hat{X}^{(j)} \sim \mathcal{N}_d(\mathbf{0}, \mathbb{I}_d)$.

    Under the null hypothesis, then, $T(\hat{X}) = T(X)$. As a consequence of the author's Theorem B.2, under the null hypothesis $T(\hat{X}) \sim \mathcal{N}(0, \sigma = n\sqrt{2d})$. In stage 3, the algorithm then adds noise from $L \sim \textrm{Laplace}(b)$ to $T(X)$, where 
    \begin{align*}
        b &= \left(5 \Delta_\delta^G+\frac{432 d}{\varepsilon} \ln \frac{n d}{\delta} \sqrt{\ln \frac{n}{\delta} \cdot \ln \frac{5}{4 \delta}}\right) / \varepsilon  \\
        \Delta_\delta^G &= 144\Bigg(d \ln \frac{d}{\delta}+\frac{d}{n \varepsilon^2} \ln ^2 \frac{1}{\delta} +\sqrt{n d} \sqrt{\ln \frac{d}{\delta} \cdot \ln \frac{n}{\delta}} +\frac{\sqrt{d}}{\varepsilon} \ln \frac{1}{\delta} \sqrt{\ln \frac{n}{\delta}}\Bigg) \ln \frac{n d}{\delta}.
    \end{align*}
    The null hypothesis is then rejected if and only if $T(\hat{X}) + L > \frac{n^2\gamma^2}{324}$. Letting $Z \sim \mathcal{N}(0, \sigma = n\sqrt{2d})$, the Type I Error of the test is then bounded below by
    \begin{align*}
        P\left(T(\hat{X}) + L > \frac{n^2\gamma^2}{324} \mid H_0\right) &= 1 - P\left(Z + L \leq \frac{n^2\gamma^2}{324}\right) \\
        &= 1 - F_{Z + L}\left(\frac{n^2\gamma^2}{324} \right).
    \end{align*}
\end{proof}

\section{Additional Figures}\label{sec:add_figs}

Here we include additional figures.

\FloatBarrier
\subsection{Pe{\~n}a-Barrientos Framework}

Figure \ref{fig:PB} presents a comparison of the theoretical power of the binomial tests proposed by \cite{awan2018differentially} and \cite{pena2022differentially}. Figures \ref{fig:App_PB_z} and \ref{fig:App_PB_ANOVA} present comparisons of the \cite{pena2022differentially} framework and the test of tests in additional settings.

\begin{figure}[h]
    \centering
    \includegraphics[width=\columnwidth]{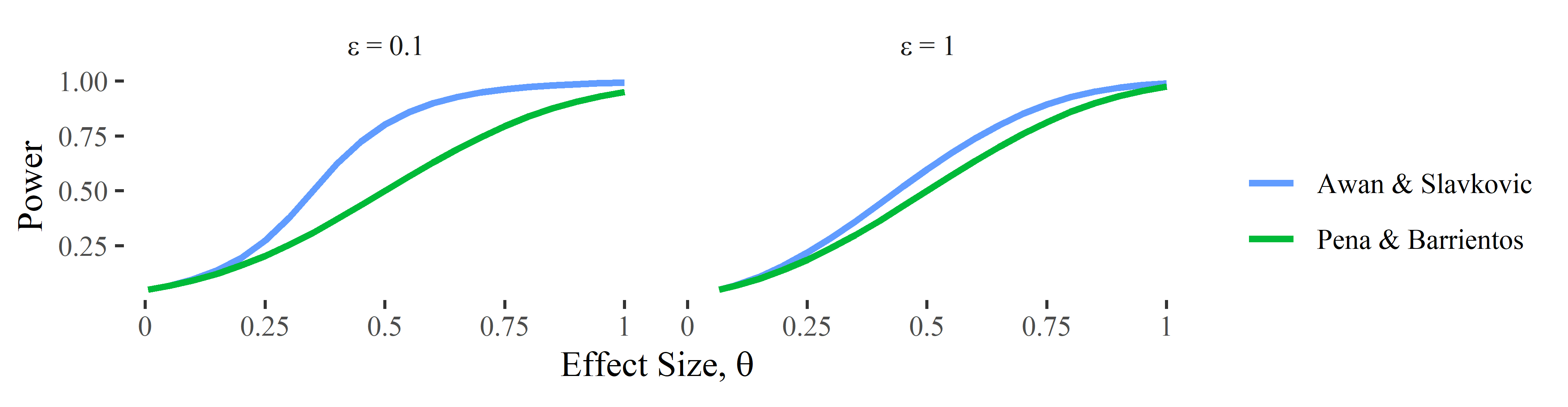}
    \caption{Let $m$ and $\alpha_0$ be the parameters selected for the PB test using the ``low-power" setting recommendations for $\alpha = 0.05$ and a given $\varepsilon$. The plot presents the power of a test of the hypotheses $H_0: \theta \leq \alpha_0$ and $H_A : \theta > \alpha_0$ for data $\data \sim \textup{Binom}(m, \theta)$ as a function of $\theta$ for the binomial tests proposed in \cite{awan2018differentially} and \cite{pena2022differentially}. The left panel presents $\varepsilon = 0.1$, and the right panel presents $\varepsilon = 1$. }
    \label{fig:PB}
\end{figure}

\begin{figure}[h]
    \centering
    \includegraphics[width=\columnwidth]{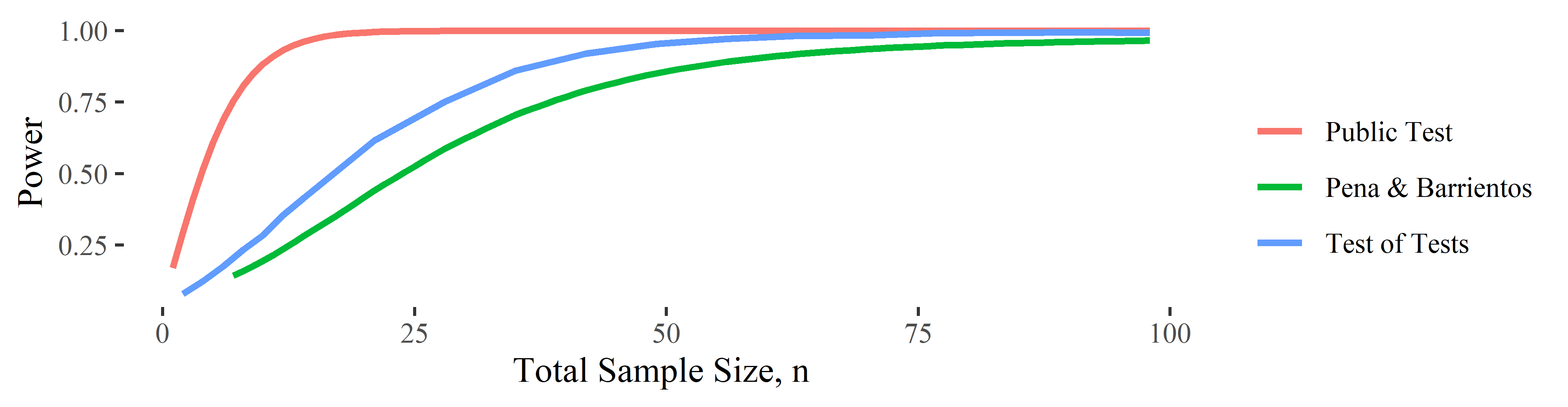}
    \caption{Power comparison between \cite{pena2022differentially} and the test of tests for various choices of database size $n$ with $\alpha = 0.05$ and a z-test. The effect size is $\mu/\sigma = 1$ and privacy parameter is $\varepsilon = 1$. We optimize $m$ and $\alpha_0$ for the test of tests at each $n$ as discussed in Section \ref{fwk:optim} with target power $\rho = 0.9$.}
    \label{fig:App_PB_z}
\end{figure}

\begin{figure}[h]
    \centering
    \includegraphics[width=\columnwidth]{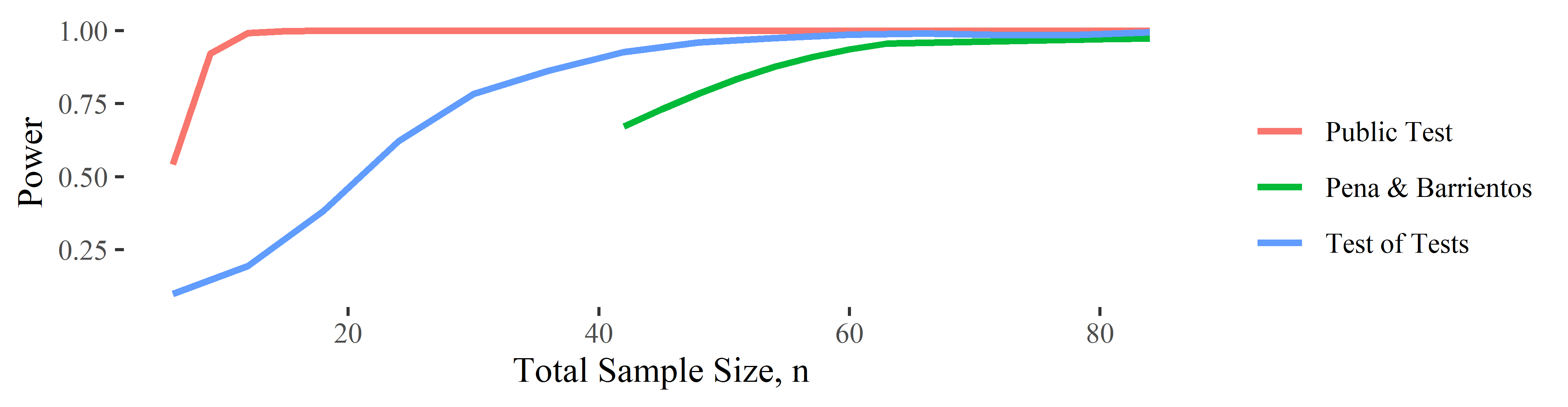}
    \caption{Power comparison between \cite{pena2022differentially} and the test of tests for various choices of database size $n$ with $\alpha = 0.05$ and a one-way ANOVA with non-private groups. The effect size is $\eta = 4$ and privacy parameter is $\varepsilon = 1$. We optimize $m$ and $\alpha_0$ for the test of tests at each $n$ as discussed in Section \ref{fwk:optim} with target power $\rho = 0.9$.}
    \label{fig:App_PB_ANOVA}
\end{figure}

\FloatBarrier
\subsection{Multivariate Normal Data}

Figure \ref{fig:Canonne_TypeI} compares the empirical Type I Error of the two tests in Figure \ref{fig:Canonne}. Figures \ref{fig:Canonne_1} to \ref{fig:Canonne_4} provide additional power comparisons between \cite{canonne2019private} and ToT with various dimensions $d$, effect sizes $\boldsymbol{\mu}$, and privacy parameters $\varepsilon$.

\begin{figure}[h]
    \centering
    \includegraphics[width=\columnwidth]{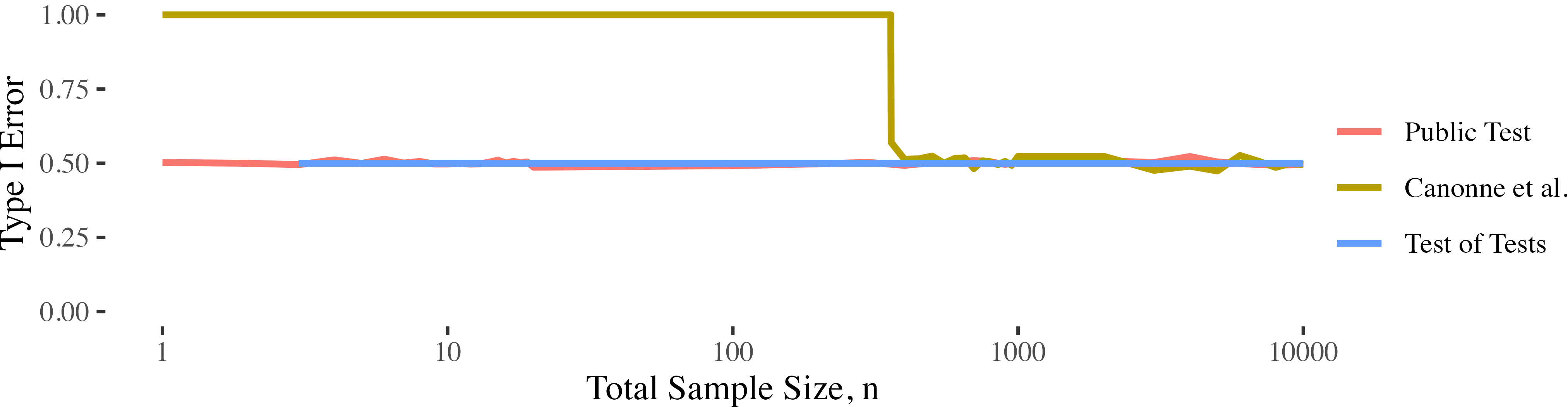}
    \caption{Type I error comparison between \cite{canonne2019private} and the tests of tests for various sample sizes $n$. The dimension is $d=100$, $\varepsilon = 1$, and $\alpha$ for the test of tests and public test is set to be $0.5$. We optimize $m$ and $\alpha_0$ for the test of tests at each $n$ as discussed in Section \ref{fwk:optim} with target $\rho = 0.9$.}
    \label{fig:Canonne_TypeI}
\end{figure}

\begin{figure}[h]
    \centering \includegraphics[width=\columnwidth]{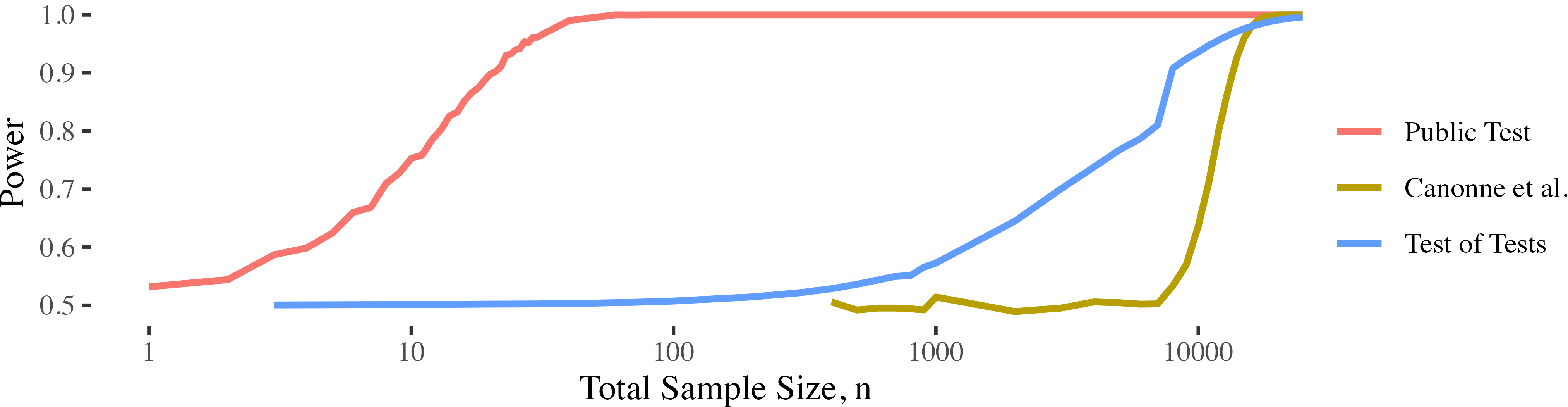}
    \caption{Power comparison between \cite{canonne2019private} and the tests of tests for various sample sizes $n$. The true mean is $\mu_1 = 0.1$ and $\mu_i = 0$ for all $i \neq 1$. The dimension is $d=100$, $\varepsilon = 1$, and $\alpha$ for the test of tests and public test is set to match the Type I Error of the \cite{canonne2019private} test. We optimize $m$ and $\alpha_0$ for the test of tests at each $n$ as discussed in Section \ref{fwk:optim} with target $\rho = 0.9$.}
    \label{fig:Canonne_1}
\end{figure}

\begin{figure}[h]
    \centering \includegraphics[width=\columnwidth]{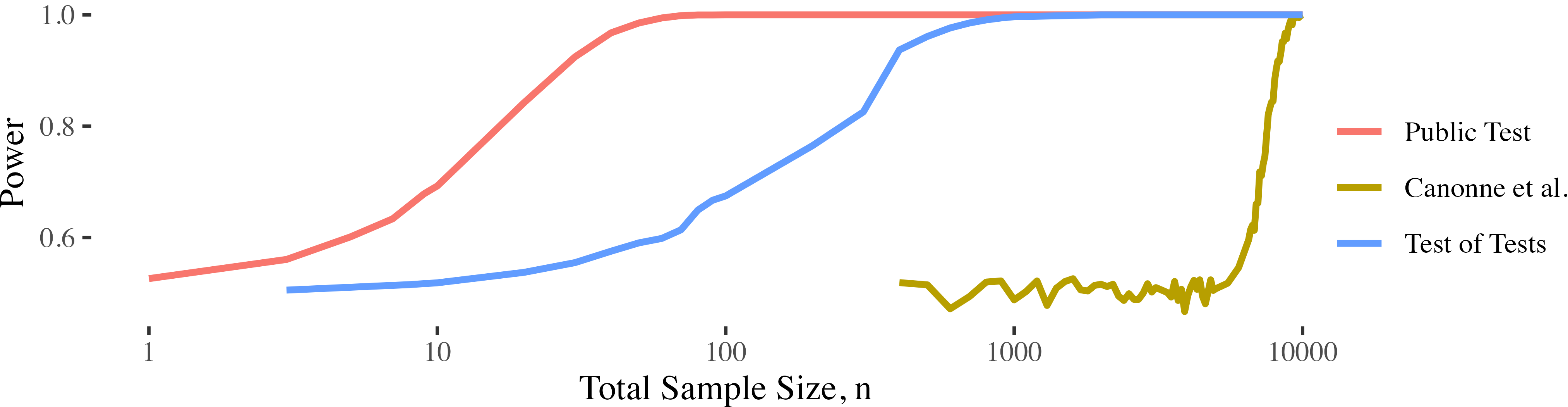}
    \caption{Power comparison between \cite{canonne2019private} and the tests of tests for various sample sizes $n$. The true mean is $\mu_i = 0.1$ for $i <=20$ and $\mu_i = 0$ otherwise. The dimension is $d=60$, $\varepsilon = 1$, and $\alpha$ for the test of tests and public test is set to match the Type I Error of the \cite{canonne2019private} test. We optimize $m$ and $\alpha_0$ for the test of tests at each $n$ as discussed in Section \ref{fwk:optim} with target $\rho = 0.9$.}
    \label{fig:Canonne_2}
\end{figure}

\begin{figure}[h]
    \centering
    \includegraphics[width=\columnwidth]{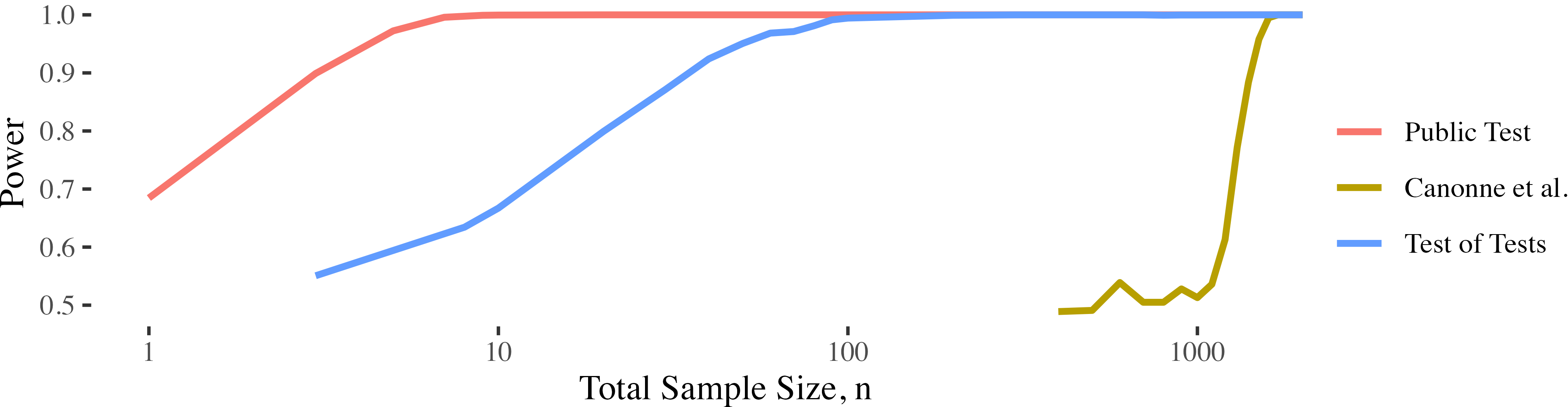}
    \caption{Power comparison between \cite{canonne2019private} and the tests of tests for various sample sizes $n$. The true mean is $\mu_i = 0.3$ for $i <=20$ and $\mu_i = 0$ otherwise. The dimension is $d=60$, $\varepsilon = 1$, and $\alpha$ for the test of tests and public test is set to match the Type I Error of the \cite{canonne2019private} test. We optimize $m$ and $\alpha_0$ for the test of tests at each $n$ as discussed in Section \ref{fwk:optim} with target $\rho = 0.9$ for $n \leq 50$ and $\rho=0.99$ for $n > 50$.}
    \label{fig:Canonne_3}
\end{figure}

\begin{figure}[h]
    \centering
    \includegraphics[width=\columnwidth]{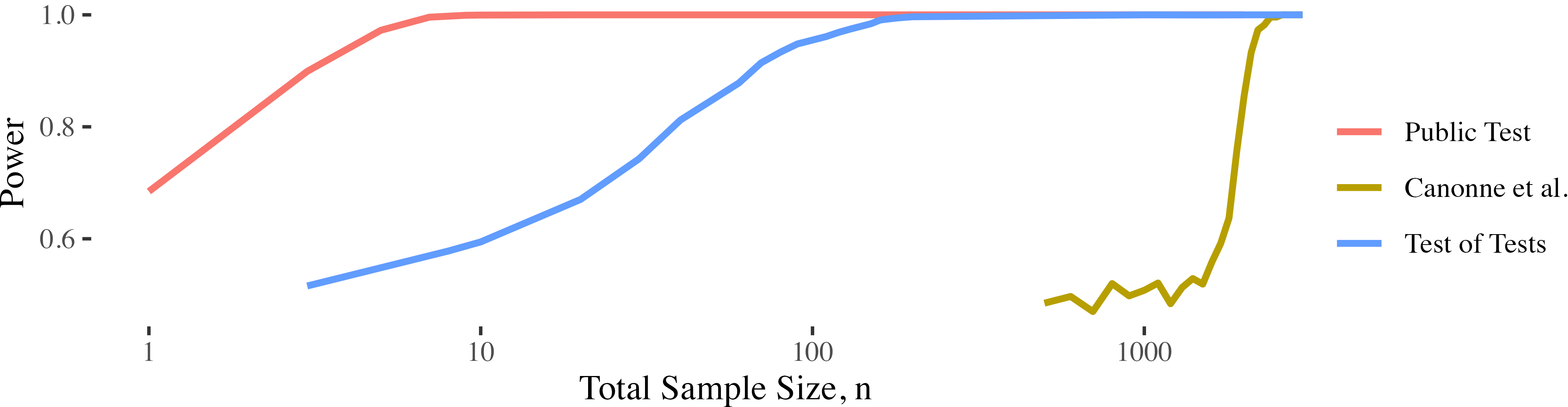}
    \caption{Power comparison between \cite{canonne2019private} and the tests of tests for various sample sizes $n$. The true mean is $\mu_i = 0.3$ for $i <=20$ and $\mu_i = 0$ otherwise. The dimension is $d=60$, $\varepsilon = 0.5$, and $\alpha$ for the test of tests and public test is set to match the Type I Error of the \cite{canonne2019private} test. We optimize $m$ and $\alpha_0$ for the test of tests at each $n$ as discussed in Section \ref{fwk:optim} with target $\rho = 0.9$ for $n \leq 100$ and $\rho=0.99$ for $n > 100$.}
    \label{fig:Canonne_4}
\end{figure}

\FloatBarrier
\subsection{One-way Analysis of Variance}

Figures \ref{fig:couch_group_2_eff_0.5} to \ref{fig:couch_group_3_eff_25} give more comparisons between \cite{couch2019differentially}. \cite{swanberg2019improved}, and ToT for various choices of parameters. Figures \ref{fig:couch_group_2_eff_0.5} and \ref{fig:couch_group_2_eff_25} are comparisons with $g=2$ groups with both effect sizes, while Figures \ref{fig:couch_group_3_eff_0.35} and \ref{fig:couch_group_3_eff_25} are comparisons with $g=3$ groups.

\begin{figure}[h]
    \centering
    \includegraphics[width=\columnwidth]{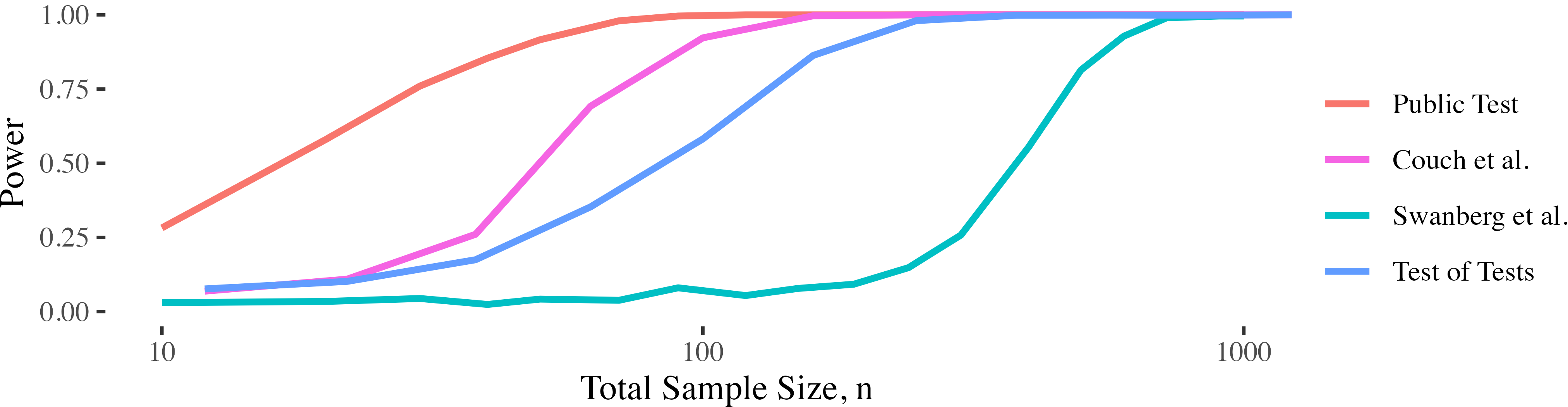}
    \caption{Power comparison between \cite{couch2019differentially}. \cite{swanberg2019improved}, and the test of tests for various choices of database size $n$. The effect size is $\eta = 0.5$, privacy parameter $\varepsilon = 1$, and number of groups $g = 2$. All groups are of equal size and $\alpha = 0.05$. All power curves (except the public test) are estimated via simulation. We optimize $m$ and $\alpha_0$ for the test of tests at each $n$ as discussed in Section \ref{fwk:optim} with target $\rho = 0.9$.}\label{fig:couch_group_2_eff_0.5}
\end{figure}

\begin{figure}[h]
    \centering
    \includegraphics[width=\columnwidth]{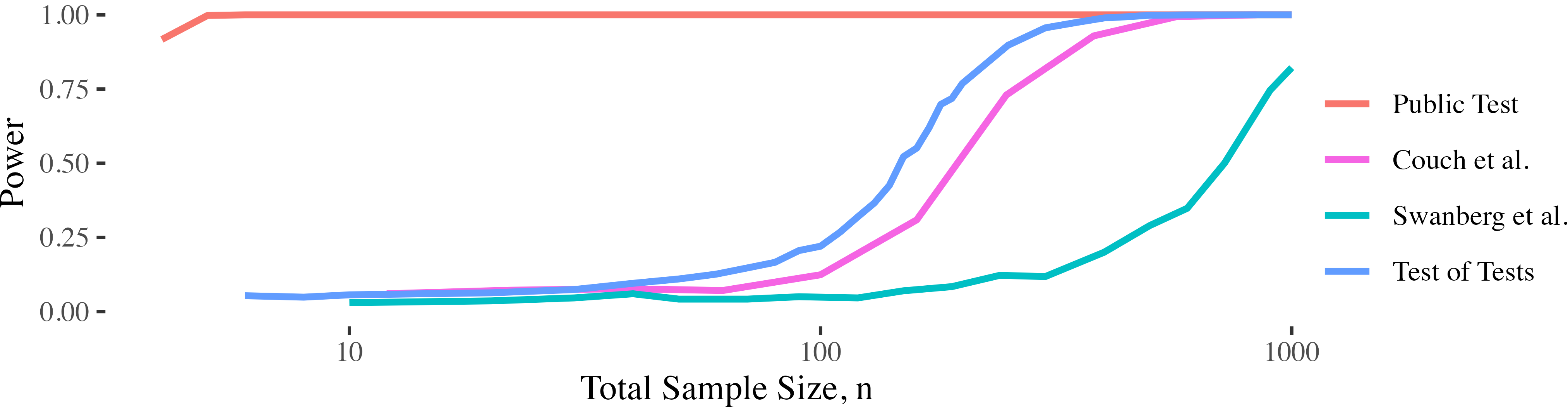}
    \caption{Power comparison between \cite{couch2019differentially}. \cite{swanberg2019improved}, and the test of tests for various choices of database size $n$. The effect size is $\eta = 25$, privacy parameter $\varepsilon = 0.1$, and number of groups $g = 2$. All groups are of equal size and $\alpha = 0.05$. All power curves (except the public test) are estimated via simulation. We optimize $m$ and $\alpha_0$ for the test of tests at each $n$ as discussed in Section \ref{fwk:optim} with target $\rho = 0.9$.}\label{fig:couch_group_2_eff_25}
\end{figure}

\begin{figure}[h]
    \centering
    \includegraphics[width=\columnwidth]{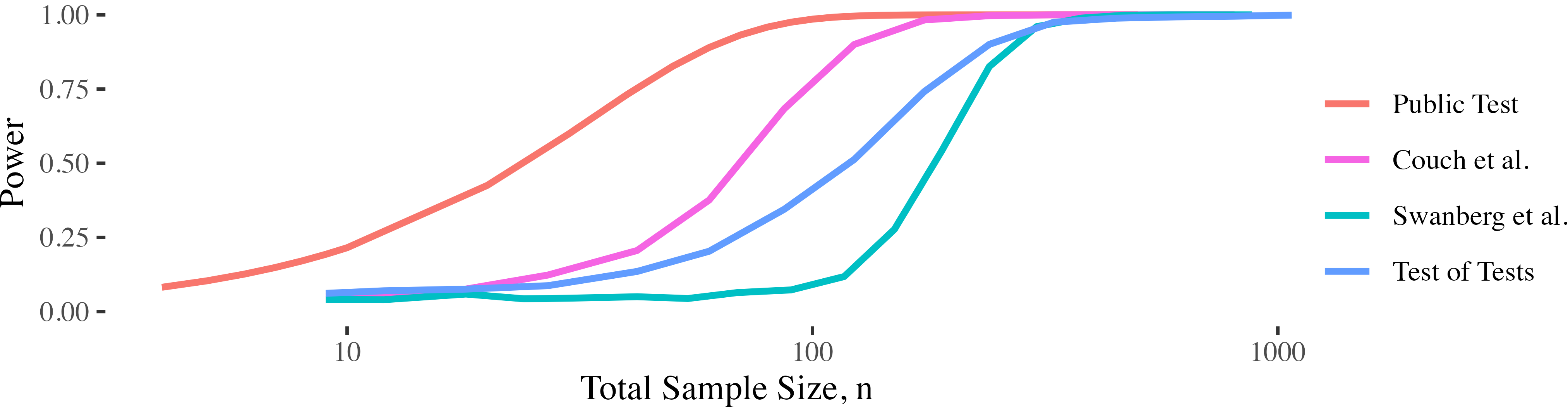}
    \caption{Power comparison between \cite{couch2019differentially}. \cite{swanberg2019improved}, and the test of tests for various choices of database size $n$. The effect size is $\eta = 0.35$, privacy parameter $\varepsilon = 1$, and number of groups $g = 3$. All groups are of equal size and $\alpha = 0.05$. All power curves (except the public test) are estimated via simulation. We optimize $m$ and $\alpha_0$ for the test of tests at each $n$ as discussed in Section \ref{fwk:optim} with target $\rho = 0.9$.}\label{fig:couch_group_3_eff_0.35}
\end{figure}

\begin{figure}[h]
    \centering
    \includegraphics[width=\columnwidth]{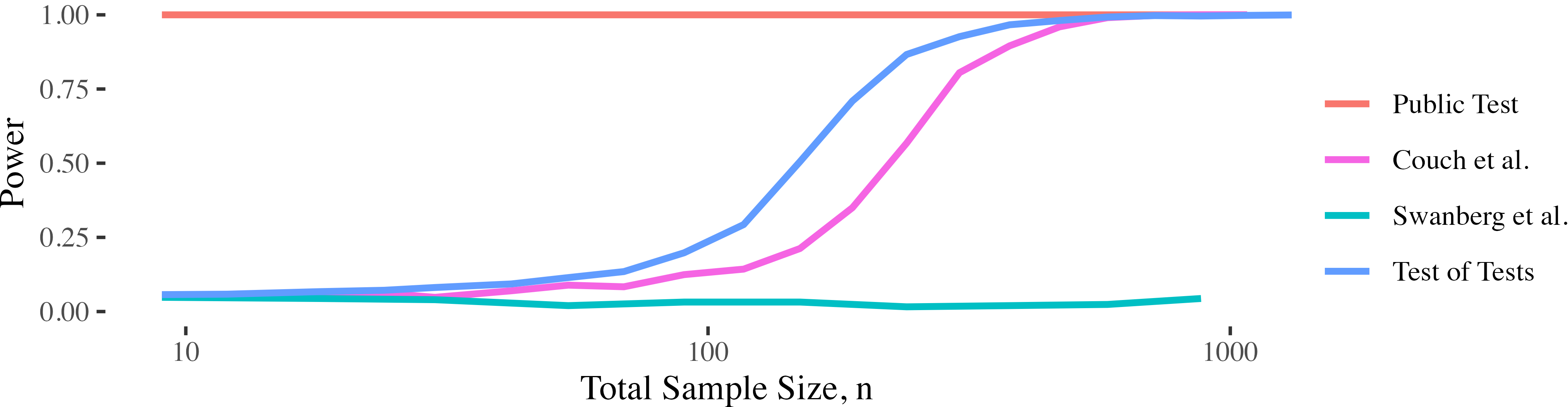}
    \caption{Power comparison between \cite{couch2019differentially}. \cite{swanberg2019improved}, and the test of tests for various choices of database size $n$. The effect size is $\eta = 25$, privacy parameter $\varepsilon = 0.1$, and number of groups $g = 3$. All groups are of equal size and $\alpha = 0.05$. All power curves (except the public test) are estimated via simulation. We optimize $m$ and $\alpha_0$ for the test of tests at each $n$ as discussed in Section \ref{fwk:optim} with target $\rho = 0.9$.}\label{fig:couch_group_3_eff_25}
\end{figure}

\end{document}